\documentclass[a4paper,reqno]{amsart} 
\usepackage[all]{xy} 
\usepackage{amssymb} 
\usepackage{hyperref}
\usepackage{eucal}


\numberwithin{equation}{section}


\usepackage[active]{srcltx} 

 1  
\newfont{\msi}{msbm8 scaled \magstephalf}  
\newfont{\msii}{msbm6 scaled \magstephalf}  

\newtheorem{definition}{Definition}[section]

\newtheorem{theorem}[definition]{Theorem}
\newtheorem{proposition}[definition]{Proposition}
\newtheorem{corollary}[definition]{Corollary}
\newtheorem{remarkth}[definition]{Remark}
\newtheorem{example}[definition]{Example}
\newenvironment{remark}{\begin{remarkth}\upshape}{\hfill$\diamond$\end{remarkth}}

\renewcommand{\emph}[1]{{\bfseries\itshape{#1}}}
\newcommand{\comp}{\makebox[7pt]{\raisebox{1.5pt}{\tiny $\circ$}}}

\newcommand{\R}{\mathbb{R}} 
\newcommand{\C}{\mathcal{C}}


\def\alg{(E,\lcf \cdot,\cdot \rcf ,\rho )}

\def\vb{\tau:E\to Q}

\newcommand{\lcf}{\lbrack\! \lbrack}
\newcommand{\rcf}{\rbrack\! \rbrack}

\newcommand{\FL}{\mathbb{F}L}
\makeatletter
\newcommand\prol{\@ifstar{\@proldf}{\@prolpf}} 
\def\@prolpf{\@ifnextchar[{\@prolpf@wrt}{\@prolpf@}}
\def\@prolpf@wrt[#1]#2{\@ifnextchar[{\@prolpf@wrt@at{#1}{#2}}{\@prolpf@wrt@{#1}{#2}}}

\def\@prolpf@wrt@at#1#2[#3]{\prolsymbol^{#1}_{#3}#2}
\def\@prolpf@wrt@#1#2{\prolsymbol^{#1}#2}
\def\@prolpf@#1{\@ifnextchar[{\@prolpf@at{#1}}{\@prolpf@@{#1}}}
\def\@prolpf@at#1[#2]{\prolsymbol_{#2}#1}
\def\@prolpf@@#1{\prolsymbol#1}
\def\@proldf{\@ifnextchar[{\@proldf@wrt}{\@proldf@}}
\def\@proldf@wrt[#1]#2{\@ifnextchar[{\@proldf@wrt@at{#1}{#2}}{\@proldf@wrt@{#1}{#2}}}

\def\@proldf@wrt@at#1#2[#3]{\prolsymbol^{*#1}_{#3}#2}
\def\@proldf@wrt@#1#2{\prolsymbol^{*#1}#2}
\def\@proldf@#1{\@ifnextchar[{\@proldf@at{#1}}{\@proldf@@{#1}}}
\def\@proldf@at#1[#2]{\prolsymbol^*_{#2}#1}
\def\@proldf@@#1{\prolsymbol^*#1}
\def\prolsymbol{\mathcal{L}}
\makeatother

\newcommand{\X}{\mathcal{X}}
\newcommand{\Y}{\mathcal{Y}}
\newcommand{\V}{\mathcal{V}}
\newcommand{\T}{\mathcal{T}}
\renewcommand{\P}{\mathcal{P}}

\newcommand{\Ker}{\mathrm{ker\;}}





\newcommand{\D}{\mathcal{D}} 

\def\lcf{\lbrack\! \lbrack}
\def\rcf{\rbrack\! \rbrack}
\setlength{\parskip}{3pt}






\newcommand{\U}{{\mathcal{U}}}




%

%

\newcommand{\pr}{\mathrm{pr}}
%


\begin{document}

\title[]
{Dirac structures and Hamilton--Jacobi theory for Lagrangian
Mechanics on Lie algebroids}

\author[M.\ Leok]{M. Leok}
\address{M.\ Leok: Department of Mathematics, University of California, San Diego, 9500 Gilman Drive, La Jolla, California, USA}
\email{mleok@math.ucsd.edu}

\author[D.\ Sosa]{D.\ Sosa}
\address{D.\ Sosa:
Departamento de Econom\'{\i}a Aplicada y Unidad Asociada ULL-CSIC
Geo\-me\-tr\'{\i}a Diferencial y Mec\'anica Geom\'etrica, Facultad de
CC. EE. y Empresariales, Universidad de La Laguna, La Laguna,
Tenerife, Canary Islands, Spain} \email{dnsosa@ull.es}

\keywords{Dirac structures, Implicit Lagrangian systems, Lie
algebroids, Lagrangian Mechanics, Nonholonomic systems,
Hamilton--Jacobi equation.}

\subjclass[2000]{17B66, 37J60, 53D17, 70F25, 70G45, 70H20, 70H45}

\begin{abstract}
This paper develops the notion of implicit Lagrangian systems on
Lie algebroids and a Hamilton--Jacobi theory for this type of
system. The Lie algebroid framework provides a natural
generalization of classical tangent bundle geometry. We define the
notion of an implicit Lagrangian system on a Lie algebroid $E$
using Dirac structures on the Lie algebroid prolongation
$\T^EE^*$. This setting includes degenerate Lagrangian systems
with nonholonomic constraints on Lie algebroids.
\end{abstract}

\maketitle

\tableofcontents
\section{Introduction}

There is a vast literature on the Lagrangian formalism in
mechanics, which is due to the central role played by these
systems in the foundations of modern mathe\-ma\-tics and physics.
In many interesting systems, problems often arise  due to their
singular nature, which gives rise to constraints that address the
fact that the evolution problem is not well-posed ({\sl internal
constraints}). Constraints can also manifest a priori restrictions
on the states of the system which arise due to physical arguments
or from external conditions ({\sl external constraints}). Both
cases are of considerable importance.

Systems with {\sl internal constraints} are quite interesting
since many dynamical systems are given in terms of presymplectic
forms instead of the more habitual symplectic ones. The more
frequent case appears in the Lagrangian formalism of singular
mechanical systems which are commonplace in many physical theories
(as in Yang-Mills theories, gravitation, etc).

Systems subjected to {\sl external constraints} (holonomic and
nonholonomic) have a wide range of applications in many different
areas: engineering, optimal control theory, mathematical economics
(growth economic theory), subriemannian geometry, motion of
microorganisms, etc. Interconnected and implicit systems play a
key role in, for example, controlled mechanical systems like
robots. An important class of implicit mechanical systems is those
with nonholonomic constraints, which has a long and rich history
(see, for instance, \cite{Bl} and \cite{NF}). The Lagrangian and
Hamiltonian approaches for such systems have been extensively
developed (see \cite{KoMa1,VSMa,VF,We}), including symmetry and
reduction (see \cite{BaSn,BlKMM,Koi,KoMa2,MR}).

Some authors have given descriptions of L-C circuits and
nonholonomics systems in the context of Poisson structures (see
\cite{MVSB,VSMa}) and later in the general context of Dirac
structures (see \cite{BlCr,VSMa2}) from a Hamiltonian point of
view. Inspired by these works, Yoshimura and Marsden in
\cite{YM1,YM2} have developed a Lagrangian formalism making use of
the framework of Dirac structures.

Recent investigations have lead to a unifying geometric framework
covering a plethora of particular situations. It is precisely the
underlying structure of a Lie algebroid on the phase space which
allows a unified treatment. This idea was first introduced by
Weinstein \cite{weinstein} in order to define a Lagrangian
formalism which is general enough to account for different types
of systems. The geometry and dynamics on Lie algebroids have been
extensively studied during the past years. In particular, in
\cite{mart}, E. Mart{\'\i}nez developed a geometric formalism of
mechanics on Lie algebroids similar to Klein's formalism of
ordinary Lagrangian mechanics and, more recently, a description of
the Hamiltonian dynamics on a Lie algebroid was given
in~\cite{LMM,Medina}. The key concept in this theory is the
prolongation, ${\mathcal T }^E E$, of the Lie algebroid over the
fiber projection $\tau$ (for the Lagrangian formalism) and the
prolongation, ${\mathcal T}^E E^*$, over the dual fiber projection
$\tau^*: E^*\to Q$ (for the Hamiltonian formalism). See \cite{LMM}
for more details. Of course, when the Lie algebroid is $E=TQ$ we
obtain that ${\mathcal T}^E E=T(TQ)$ and ${\mathcal
T}^EE^*=T(T^*Q)$, recovering the classical case. Another approach
to the theory was discussed in \cite{GGU}.

The notion of nonholonomic systems on a Lie algebroid was
introduced in \cite{CoMa} when studying mechanical control systems
and an approach to mechanical systems on Lie algebroids subject to
linear constraints was presented in \cite{MeLa}. A recent
comprehensive treatment of nonholonomic systems on a Lie algebroid
has been develop in \cite{CoLeMaMa}, where the authors identify
suitable conditions guaranteeing that the system admits a unique
solution and show that many of the properties that standard
nonholonomic systems enjoy have counterparts in the Lie algebroid
setting.

On the other hand, singular or degenerate Lagrangian systems and
vakonomic mechanics on Lie algebroids (obtained through the
application of a constrained variational principle) also have been
studied. In \cite{IMMS}, the authors introduce a constraint
algorithm for presymplectic Lie algebroids which generalizes the
well-known Gotay-Nester-Hinds algorithm (see \cite{GNH}) and
applies it to singular Lagrangian systems on Lie algebroids.
Moreover, they develop a geometric description of vakonomic
mechanics on Lie algebroids using again the constraint algorithm.

As a consequence of all these investigations, one deduces that
there are several reasons for discussing unconstrained
(constrained) Mechanics on Lie algebroids:

i) The inclusive nature of the Lie algebroid framework. In fact,
under the same umbrella, one can consider standard unconstrained
(constrained) mechanical systems, (nonholonomic and vakonomic)
Lagrangian systems on Lie algebras, unconstrained (constrained)
systems evolving on semidirect products or (nonholonomic and
vakonomic) Lagrangian systems with symmetries.

ii) The reduction of a (nonholonomic or vakonomic) mechanical
system on a Lie algebroid is a (nonholonomic or vakonomic)
mechanical system on a Lie algebroid. However, the reduction of a
standard unconstrained (constrained) system on the tangent
(cotangent) bundle of the configuration manifold is not, in
general, a standard unconstrained (constrained) system.

iii) The theory of Lie algebroids gives a natural interpretation
of the use of quasi-coordinates (velocities) in Mechanics
(particularly, in nonholonomic and vakonomic mechanics).

On the other hand, Hamilton--Jacobi theory has been studied for
different type of systems for many years. For degenerate
Lagrangian systems, some work have been done on extending
Hamilton--Jacobi theory, using Dirac's theory of constraints (see,
e.g., \cite{HT}) and from a geometric point of view (see
\cite{CGMMMR}). For nonholonomic systems, in \cite{ILM},
Iglesias-Ponte, de Le\'{o}n and Mart\'{\i}n de Diego generalized the
geometric Hamilton--Jacobi theorem (see Theorem 5.2.4. in
\cite{AM}) to nonholonomic systems, which has been studied further
(see \cite{CGMMMR2,OB,OFBZ}). More recently, in \cite{LOS}, the
authors have presented a Hamilton--Jacobi theory which can deal
with both degeneracy and nonholonomic constraints. In the context
of Lie algebroids, de Le\'{o}n, Marrero and Mart\'{\i}n de Diego have
developed a more general formalism which is also valid for for
nonholonomic systems on a Lie algebroid (see \cite{LMM2}), and, in
\cite{BMMP}, the authors have presented a Hamilton--Jacobi
equation for a Hamiltonian system on a skew-symmetric algebroid.

The goal of this paper is to generalize Hamilton--Jacobi theory to
implicit Lagrangian systems on a Lie algebroid based on Dirac
structures. We introduce the notion of an implicit Lagrangian
system on a Lie algebroid $E$ using the induced generalized Dirac
structure $\D_\U$ on the Lie algebroid prolongation $\T^EE^*$ that
is naturally induced by a vector subbundle $\U$ of $E$ and we
obtain the Hamilton--Jacobi theorem for this kind of systems. This
setting includes degenerate Lagrangian systems with nonholonomic
constraints.

The paper is organized as follows. In Section \ref{Secpre}, we
collect some preliminary notions and geometric objects on Lie
algebroids, including differential calculus, morphism and
prolongations. We also recall the definition and some properties
of (generalized) Dirac structures on vector spaces, vector bundles
and manifolds. In Section \ref{SecILS}, first we introduce and
study the generalized Dirac structure $\D_\U$ on $\T^EE^*$ induced
by a vector subbundle $\U$ of the Lie algebroid $E$. The main goal
of this section is to define implicit Lagrangian systems in terms
of induced Dirac structures. In Section \ref{SecHJ}, we develop a
Hamilton--Jacobi theory for implicit Lagrangian systems on a Lie
algebroid. We apply the results obtained to some particular cases,
in Section \ref{Secex}, recovering some known results. The paper
ends with our conclusions and a description of future research
directions.

\section{Preliminaries}\label{Secpre}

\subsection{Lie algebroids}\label{algebroides} Let $E$ be a vector bundle of rank
$n$ over a manifold $Q$ of dimension $m$ and $\tau:E\to Q$ be the
vector bundle projection. Denote by $\Gamma(E)$ the
$C^\infty(Q)$-module of sections of $\tau:E\to Q$. A \emph{Lie
algebroid structure } $(\lcf\cdot,\cdot\rcf,\rho)$ on $E$ is a Lie
bracket $\lcf\cdot,\cdot\rcf$ on the space $\Gamma(E)$ and a
bundle map $\rho:E\to TQ$, called \emph{the anchor map}, such that
if we also denote by $\rho:\Gamma(E)\to {\mathfrak X}(Q)$ the
homomorphism of $C^\infty(Q)$-modules induced by the anchor map,
then
\[
\lcf X,fY\rcf=f\lcf X,Y\rcf + \rho(X)(f)Y,
\]
for $X,Y\in \Gamma(E)$ and $f\in C^\infty(Q)$. The triple
$(E,\lcf\cdot,\cdot\rcf,\rho)$ is called \emph{a Lie algebroid
over} $Q$ (see \cite{Ma}).

If $(E,\lcf\cdot,\cdot\rcf,\rho)$ is a Lie algebroid over $Q,$
then the anchor map $\rho:\Gamma(E)\to {\mathfrak X}(Q)$ is a
homomorphism between the Lie algebras
$(\Gamma(E),\lcf\cdot,\cdot\rcf)$ and $({\mathfrak
X}(Q),[\cdot,\cdot])$.

Standard examples of Lie algebroids are real Lie algebras of
finite dimension and the tangent bundle $TQ$ of an arbitrary
manifold $Q.$ In more detail, let $(\mathfrak
g,[\cdot,\cdot]_{\mathfrak g})$ be a real Lie algebra of finite
dimension. Then, consider the vector bundle $\tau: \mathfrak g\to
\{\text{ one point }\}$. The section of this vector bundle can be
identified with the elements of $\mathfrak g$ and, therefore, we
can consider the Lie bracket given by the Lie algebra structure
$[\cdot,\cdot]_{\mathfrak g}$ on ${\mathfrak g}$ and the anchor
map $\rho$ given by the null map. So, $(\mathfrak
g,[\cdot,\cdot]_{\mathfrak g},0)$ is a Lie algebroid over a point.
On the other hand, let $Q$ a manifold. The sections of the tangent
bundle $\tau_E=\tau_Q:E=TQ\to Q$ may be identified with the vector
fields on $Q$, the Lie bracket on $\Gamma(\tau_E)={\mathfrak
X}(Q)$ is the usual vector fields bracket and the anchor map is
the identity on $TQ$. Then, the triple $(TQ,[\cdot,\cdot],Id)$ is
a Lie algebroid over $Q$.

Another example of a Lie algebroid may be constructed as follows.
Let $\pi:P\to Q$ be a principal bundle with structure group $G$.
Denote by $\Phi:G\times P\to P$ the free action of $G$ on $P$ and
by $T\Phi:G\times TP\to TP$ the tangent lifted action of $G$ on
$TP$. Then, one may consider the quotient vector bundle
$\tau_P|G:TP/G\to Q=P/G$ and the sections of this vector bundle
may be identified with the vector fields on $P$ which are
invariant under the action $\Phi$. Using the fact that every
$G$-invariant vector field on $P$ is $\pi$-projectable and the
fact that the standard Lie bracket on vector fields is closed with
respect to $G$-invariant vector fields, we can induce a Lie
algebroid structure on $TP/G$. The resultant Lie algebroid is
called \emph{the Atiyah (gauge) algebroid associated with the
principal bundle} $\pi:P\to Q$ (see \cite{LMM,Ma}).


Now, let $(E,\lcf\cdot,\cdot\rcf,\rho)$ be a Lie algebroid, then
one may define \emph{the differential} of $E$,
$d^E:\Gamma(\wedge^k E^*)\to \Gamma(\wedge^{k+1}E^*)$, as follows
\begin{multline*} d^E \mu(X_0,\dots, X_k)=
\sum_{i=0}^{k} (-1)^i\rho(X_i)(\mu(X_0,\dots,
\widehat{X_i},\dots, X_k))\\
+\displaystyle\sum_{i< j}(-1)^{i+j}\mu(\lcf X_i,X_j\rcf,X_0,\dots,
\widehat{X_i},\dots,\widehat{X_j},\dots ,X_k),
\end{multline*}
for $\mu\in \Gamma(\wedge^k E^*)$ and $X_0,\dots ,X_k\in
\Gamma(E).$ It follows that $(d^E)^2=0$. Moreover, if
$X\in\Gamma(E)$, one may introduce, in a natural way, \emph{the
Lie derivative with respect to $X$,} as the operator
$\pounds^E_X:\Gamma(\wedge^kE^*)\to \Gamma(\wedge^k E^*)$ given by
$\pounds^E_X=i_X\circ d^E + d^E\circ i_X.$

Note that if $E = TQ$ and $X \in \Gamma(E) = {\mathfrak X}(Q)$
then $d^{TQ}$ and $\pounds_{X}^{TQ}$ are the usual differential
and the usual Lie derivative with respect to $X$, respectively.

If we take local coordinates $(x^i)$ on an open subset $U$ of $Q$
and a local basis $\{e_\alpha\}$ of sections of $E$ defined on
$U$, then we have the corresponding local coordinates
$(x^i,y^\alpha)$ on $E$, where $y^\alpha(e)$ is the $\alpha$-th
coordinate of $e\in E$ in the given basis. Such coordinates
determine local functions $\rho_\alpha^i$, ${\mathcal
C}_{\alpha\beta}^{\gamma}$ on $Q$ which contain local information
about the Lie algebroid structure and, accordingly, they are
called \emph{the structure functions of the Lie algebroid.} They
are given by
\[
\lcf e_\alpha,e_\beta\rcf={\mathcal C}_{\alpha\beta}^\gamma
e_\gamma\quad\text{and}\quad\rho(e_\alpha)=\rho_\alpha^i\frac{\partial
}{\partial x^i}.
\]
These functions should satisfy the relations
\begin{align*}
\rho_\alpha^j\frac{\partial \rho_\beta^i}{\partial x^j}
-\rho_\beta^j\frac{\partial \rho_\alpha^i}{\partial x^j}&=
\rho_\gamma^i{\mathcal C}_{\alpha\beta}^\gamma ,\\
\sum_{cyclic(\alpha,\beta,\gamma)}\Big (
\rho_{\alpha}^i\frac{\partial {\mathcal
C}_{\beta\gamma}^\delta}{\partial x^i} + {\mathcal C}_{\alpha
\nu}^\delta {\mathcal C}_{\beta\gamma}^\nu \Big )&=0,
\end{align*}
which are usually called \emph{the structure equations}.

If $f\in C^\infty(Q)$, we have that
\begin{equation}\label{diff0}
d^E f=\frac{\partial f}{\partial x^i}\rho_\alpha^i e^\alpha,
\end{equation}
where $\{e^\alpha\}$ is the dual basis of $\{e_\alpha\}$. On the
other hand, if $\theta\in \Gamma(E^*)$ and $\theta=\theta_\gamma
e^\gamma$, it follows that
$$d^E \theta=\Big(\frac{\partial \theta_\gamma}{\partial
x^i}\rho^i_\beta-\frac{1}{2}\theta_\alpha {\mathcal
C}^\alpha_{\beta\gamma}\Big)e^{\beta}\wedge e^\gamma.$$
In particular,
\begin{align*}
d^E x^i&=\rho_\alpha^ie^\alpha, \qquad d^E
e^\alpha=-\frac{1}{2}{\mathcal C}_{\beta\gamma}^{\alpha}
e^\beta\wedge e^\gamma.
\end{align*}

\subsection{Morphisms} Let $(E,\lcf\cdot,\cdot \rcf,\rho)$
and $(E',\lcf\cdot,\cdot\rcf', \rho')$ be Lie algebroids over $Q$
and $Q'$, respectively. A morphism of vector bundles $(F,f)$ from
$E$ to $E'$

\begin{picture}(375,90)(80,7)
\put(195,20){\makebox(0,0){$Q$}}
\put(245,25){$f$}\put(210,20){\vector(1,0){80}}
\put(305,20){\makebox(0,0){$Q'$}} \put(185,50){$\tau$}
\put(195,70){\vector(0,-1){40}} \put(310,50){$\tau'$}
\put(305,70){\vector(0,-1){40}} \put(195,80){\makebox(0,0){$E$}}
\put(245,85){$F$}\put(210,80){\vector(1,0){80}}
\put(305,80){\makebox(0,0){$E'$}} \end{picture}

\vspace{0cm} \noindent is a \emph{Lie algebroid morphism} if
\begin{equation}\label{Qorph}
d^E ((F,f)^*\phi')= (F, f)^*(d^{E'}\phi'), \quad \text{for
}\phi'\in \Gamma(\wedge^k(E')^*).
\end{equation}
Note that $(F, f)^*\phi'$ is the section of the vector bundle
$\wedge^kE^*\to Q$ defined by
\[
((F,f)^*\phi')_x(a_1,\dots ,a_k)=\phi'_{f(x)}(F(a_1),\dots
,F(a_k)),
\]
for $x\in Q$ and $a_1,\dots ,a_k\in E_{x}$, where $E_x$ denotes
the fiber of $E$ at the point $x\in Q$. We remark that
(\ref{Qorph}) holds if and only if
\begin{align*}
d^E(g'\circ f)&=(F, f)^{*}(d^{E'}g'),\qquad
\text{for }g'\in C^\infty(Q'),\\
d^E((F,f)^*\alpha')&=(F, f)^*(d^{E'}\alpha'),\qquad\text{for }
\alpha'\in \Gamma((E')^*).
\end{align*}

If $(F,f)$ is a Lie algebroid morphism, $f$ is an injective
immersion and $F_{|E_x}:E_x\rightarrow E'_{f(x)}$ is injective,
for all $x\in Q$, then $(E,\lcf\cdot,\cdot\rcf,\rho)$ is said to
be a \emph{Lie subalgebroid} of $(E',\lcf\cdot,\cdot\rcf',\rho')$.

If $Q=Q'$ and $f=id:Q\to Q$ then, it is easy prove that the pair
$(F,id)$ is a Lie algebroid morphism if and only if
\[F\lcf X,Y\rcf=\lcf FX,FY \rcf',\qquad \rho'(FX)=\rho(X),\]
for $X,Y\in \Gamma(E).$

\subsection{Poisson structure on $E^*$} Let $(E,\lcf\cdot,\cdot\rcf,\rho)$ be a Lie algebroid
over $Q$ and $E^*$ be the dual bundle to $E.$ Then, $E^*$ admits a
linear Poisson structure $\Pi_{E^*}$, that is, $\Pi_{E^*}$ is a
$2$-vector on $E^*$ such that
\[
[\Pi_{E^*},\Pi_{E^*}]=0,
\]
and if $f$ and $f'$ are linear functions on $E^*,$ we have that
$\Pi_{E^*}(d^{TE^*}f,d^{TE^*}f')$ is also a linear function on
$E^*$. If $(x^i)$ are local coordinates on $Q$, $\{e_\alpha\}$ is
a local basis of $\Gamma(E)$ and $(x^i,p_\alpha)$ are the
corresponding local coordinates on $E^*$, then the local
expression for $\Pi_{E^*}$ is
$$\Pi_{E^*}=\rho_\alpha^i\frac{\partial }{\partial x^i}\wedge
\frac{\partial }{\partial p_\alpha}-\frac{1}{2}
\C_{\alpha\beta}^\gamma p_\gamma\frac{\partial}{\partial
p_\alpha}\wedge \frac{\partial }{\partial p_\beta},$$
where $\rho_\alpha^i$ and $\C_{\alpha\beta}^\gamma$ are the
structure functions of $E$ with respect to the coordinates $(x^i)$
and to the basis $\{e_\alpha\}$. The Poisson structure $\Pi_{E^*}$
induces a linear Poisson bracket of functions on $E^*$ which we
will denote by $\{ \; , \; \}_{E^*}$. In fact, if $F, G \in
C^{\infty}(E^*)$ then \begin{equation}\label{corPoisson} \{F,
G\}_{E^*} = \Pi_{E^*}(d^{TE^*}F, d^{TE^*}G). \end{equation} (For
more details, see \cite{LMM}).

\subsection{The prolongation of a Lie algebroid over a
fibration}\label{secprol}

Let $(E,\lcf\cdot,\cdot\rcf,\rho)$ be a Lie algebroid of rank $n$
over a manifold $Q$ of dimension $m$ and $\pi : P \to Q$ be a
fibration, that is, a surjective submersion.

We consider the subset ${\mathcal T}^EP$ of $E\times TP$ defined
by ${\mathcal T}^EP=\displaystyle \bigcup_{p\in P} {\mathcal
T}^E_pP$, where
\[
{\mathcal T}^E_pP=\{(b,v)\in E_{\pi(p)}\times
T_{p}P\,|\,\rho(b)=(T_{p}\pi)(v)\},
\]
and $T\pi:TP\to TQ$ is the tangent map to $\pi$.

Denote by $\tau^{\pi}:{\mathcal T}^EP\to P$ the map given by
\[
\tau^{\pi}(b,v)=\tau_{P}(v),
\]
for $(b,v)\in {\mathcal T}^EP$, where $\tau_{P}:TP\to P$ is the
canonical projection. Then, if $m'$ is the dimension of $P$, one
may prove that
\[
\dim\,{\mathcal T}^E_pP=n+m'-m.
\]
Thus, we conclude that ${\mathcal T}^EP$ is a vector bundle over
$P$ of rank $n+m'-m$ with the vector bundle projection
$\tau^{\pi}:{\mathcal T}^EP\to P.$

A section $\tilde{X}$ of $\tau^{\pi}: {\mathcal T}^{E}P \to P$ is
said to be \emph{projectable} if there exists a section $X$ of
$\tau: E \to Q$ and a vector field $U$ on $P$ which is
$\pi$-projectable to the vector field $\rho(X)$ and such that
$\tilde{X}(p) = (X(\pi(p)), U(p))$, for all $p \in P$. For such a
projectable section $\tilde{X}$, we will use the following
notation $\tilde{X} \equiv (X, U)$. It is easy to prove that one
may choose a local basis of projectable sections of the space
$\Gamma({\mathcal T}^{E}P)$.

The vector bundle $\tau^{\pi}: {\mathcal T}^{E}P \to P$ admits a
Lie algebroid structure $(\lcf \cdot , \cdot \rcf^{\pi},
\rho^{\pi})$. In fact,
\[
\lcf (X_{1}, U_{1}), (X_{2}, U_{2})\rcf^{\pi} = (\lcf X_{1},
X_{2}\rcf, [U_{1}, U_{2}]), \qquad \rho^{\pi}(X_{1}, U_{1}) =
U_{1}.
\]
The Lie algebroid $({\mathcal T}^{E}P, \lcf \cdot , \cdot
\rcf^{\pi}, \rho^{\pi})$ is called \emph{the prolongation of $E$
over $\pi$ or the $E$-tangent bundle to $P$}. Note that if
$\pr_1:{\mathcal T}^EP\to E$ is the canonical projection on the
first factor, then the pair $(\pr_1,\pi)$ is a morphism between
the Lie algebroids $({\mathcal
T}^EP,\lcf\cdot,\cdot\rcf^{\pi},\rho^{\pi})$ and
$(E,\lcf\cdot,\cdot \rcf,\rho)$ (for more details, see
\cite{LMM}).

\begin{example}\label{exTEE}
{\rm Let $(E,\lcf\cdot,\cdot \rcf,\rho)$ be a Lie algebroid of
rank $n$ over a manifold $Q$ of dimension $m$ and $\tau:E\to Q$ be
the vector bundle projection. Consider the prolongation ${\mathcal
T}^EE$ of $E$ over $\tau,$
\[
{\mathcal T}^EE=\{({e},v)\in E\times TE\,|\,
\rho({e})=(T\tau)(v)\}.
\]
${\mathcal T}^EE$ is a Lie algebroid over $E$ of rank $2n$ with
Lie algebroid structure $(\lcf\cdot,\cdot\rcf^{\tau},
\rho^{\tau})$.

If $(x^i)$ are local coordinates on an open subset $U$ of $Q$ and
$\{e_\alpha\}$ is a basis of sections of the vector bundle
$\tau^{-1}(U)\to U$, then $\{{\mathcal X}_\alpha,{\mathcal
V}_\alpha\}$ is a basis of sections of the vector bundle
$(\tau^{\tau})^{-1}(\tau^{-1}(U))\to \tau^{-1}(U)$, where
$\tau^{\tau}:{\mathcal T}^EE\to E$ is the vector bundle projection
and \begin{equation}\label{locbasisTEE} {\mathcal
X}_\alpha(e)=\Big(e_\alpha(\tau(e)),\rho_\alpha^i\displaystyle\frac{\partial
}{\partial x^i}_{|e}\Big),\qquad {\mathcal
V}_\alpha(e)=\Big(0,\displaystyle\frac{\partial }{\partial
y^{\alpha}}_{|e}\Big),
\end{equation}
for $e\in \tau^{-1}(U).$ Here, $\rho_\alpha^i$ are the components
of the anchor map with respect to the basis $\{e_\alpha\}$ and
$(x^i,y^\alpha)$ are the local coordinates on $E$ induced by the
local coordinates $(x^i)$ and the basis $\{e_\alpha\}$. Using the
local basis $\{{\mathcal X}_\alpha,{\mathcal V}_\alpha\}$, one may
introduce, in a natural way, local coordinates
$(x^i,y^\alpha;s^\alpha,w^\alpha)$ on ${\mathcal T}^EE.$ If
$\omega$ is a point of $(\tau^{\tau})^{-1}(\tau^{-1}(U))$, then
$(x^i,y^\alpha)$ are the coordinates of the point
$\tau^{\tau}(\omega)\in \tau^{-1}(U)$ and
\[
\omega=s^\alpha{\mathcal X}_\alpha(\tau^{\tau}(\omega)) + w^\alpha
{\mathcal V}_\alpha(\tau^{\tau}(\omega)).
\]
On the other hand, we have that
\begin{align*}
\lcf{\mathcal X}_\alpha,{\mathcal X}_\beta\rcf^{\tau}&={\mathcal
C}_{\alpha\beta}^\gamma{\mathcal X}_\gamma,& \lcf{\mathcal
X}_\alpha,{\mathcal V}_\beta\rcf^{\tau}&=\lcf {\mathcal
V}_\alpha, {\mathcal V}_\beta\rcf^{\tau}=0,\\
\rho^{\tau}({\mathcal
X}_\alpha)&=\rho_\alpha^i\displaystyle\frac{\partial }{\partial
x^i},& \rho^{\tau}({\mathcal
V}_\alpha)&=\displaystyle\frac{\partial }{\partial y^\alpha},
\end{align*}
for all $\alpha$ and $\beta$, where ${\mathcal
C}_{\alpha\beta}^\gamma$ are the structure functions of the Lie
bracket $\lcf\cdot,\cdot\rcf$ with respect to the basis
$\{e_\alpha\}$.

The vector subbundle $({\mathcal T}^EE)^V$ of ${\mathcal T}^EE$
whose fiber at the point $e\in E$ is
$$({\mathcal T}^E_eE)^V=\{(0,v)\in E\times T_eE\,|\,(T_e\tau)(v)=0\}$$
is called \emph{the vertical subbundle}. Note that $({\mathcal
T}^EE)^V$ is locally generated by the sections $\{{\mathcal
V}_\alpha\}$.

Two canonical objects on ${\mathcal T}^EE$ are \emph{the Euler
section} $\Delta$ and \emph{the vertical endomorphism $S$}.
$\Delta$ is the section of ${\mathcal T}^EE\to E$ locally defined
by
$$\label{Lioulo} \Delta = y^{\alpha}\V_{\alpha},$$
and $S$ is the section of the vector bundle $({\mathcal
T}^EE)\otimes ({\mathcal T}^EE)^*\to E$ locally characterized by
the following conditions
\begin{equation}\label{endverlo}
S(\X_{\alpha}) = \V_{\alpha}, \makebox[.3cm]{} S(\V_{\alpha}) = 0,
\makebox[.3cm]{} \text{ for all } \alpha.
\end{equation}
Finally, a section $\xi$ of ${\mathcal T}^EE\to E$ is said to be a
\emph{second-order differential equation} (SODE) on $E$ if
$S(\xi)=\Delta$ or, alternatively, $\pr_1(\xi(e))=e$, for all
$e\in E$ (for more details, see \cite{LMM}). }
\end{example}

\begin{example}
{\rm

Let $(E,\lcf\cdot,\cdot \rcf,\rho)$ be a Lie algebroid of rank $n$
over a manifold $Q$ of dimension $m$ and $\tau^*:E^*\to Q$ be the
vector bundle projection of the dual bundle $E^*$ to $E$.

We consider the prolongation ${\mathcal T}^EE^*$ of $E$ over
$\tau^*,$
\[
{\mathcal T}^EE^*=\{({e}',v)\in E\times TE^*\,|\,
\rho({e}')=(T\tau^*)(v)\}.
\]
${\mathcal T}^EE^*$ is a Lie algebroid over $E^*$ of rank $2n$
with Lie algebroid structure $(\lcf\cdot,\cdot\rcf^{\tau^*},
\rho^{\tau^*})$.

If $(x^i)$ are local coordinates on an open subset $U$ of $Q$,
$\{e_\alpha\}$ is a basis of sections of the vector bundle
$\tau^{-1}(U)\to U$ and $\{e^\alpha\}$ is the dual basis of
$\{e_\alpha\}$, then $\{{\mathcal Y}_\alpha,{\mathcal P}^\alpha\}$
is a basis of sections of the vector bundle
$(\tau^{\tau^*})^{-1}((\tau^*)^{-1}(U))\to (\tau^*)^{-1}(U)$,
where $\tau^{\tau^*}:{\mathcal T}^EE^*\to E^*$ is the vector
bundle projection and
\begin{align}\label{basisTEE*}
{\mathcal
Y}_\alpha(e^*)&=\Big(e_\alpha(\tau^*(e^*)),\rho_\alpha^i\displaystyle\frac{\partial
}{\partial x^i}_{|e^*}\Big),& {\mathcal
P}^\alpha(e^*)=\Big(0,\displaystyle\frac{\partial }{\partial
p_{\alpha}}_{|e^*}\Big),
\end{align}
for $e^*\in (\tau^*)^{-1}(U).$ Here, $(x^i,p_\alpha)$ are the
local coordinates on $E^*$ induced by the local coordinates
$(x^i)$ and the basis $\{e^\alpha\}$ of $\Gamma(E^*)$. Using the
local basis $\{{\mathcal Y}_\alpha,{\mathcal P}^\alpha\}$, one may
introduce, in a natural way, local coordinates
$(x^i,p_\alpha;z^\alpha,u_\alpha)$ on ${\mathcal T}^EE^*.$ If
$\omega^*$ is a point of $(\tau^{\tau^*})^{-1}((\tau^*)^{-1}(U))$,
then $(x^i,p_\alpha)$ are the coordinates of the point
$\tau^{\tau^*}(\omega^*)\in (\tau^*)^{-1}(U)$ and
\[
\omega^*=z^\alpha{\mathcal Y}_\alpha(\tau^{\tau^*}(\omega^*)) +
u_\alpha {\mathcal P}^\alpha(\tau^{\tau^*}(\omega^*)).
\]
On the other hand, we have that
\begin{equation}
\label{rhoTEE*}
\begin{aligned}
\lcf{\mathcal Y}_\alpha,{\mathcal Y}_\beta\rcf^{\tau^*}&={\mathcal
C}_{\alpha\beta}^\gamma{\mathcal Y}_\gamma, &\qquad \lcf{\mathcal
Y}_\alpha,{\mathcal P}^\beta\rcf^{\tau^*}&=\lcf {\mathcal
P}^\alpha, {\mathcal
P}^\beta\rcf^{\tau^*}=0,\\
\rho^{\tau^*}({\mathcal
Y}_\alpha)&=\rho_\alpha^i\displaystyle\frac{\partial }{\partial
x^i},& \rho^{\tau^*}({\mathcal
P}^\alpha)&=\displaystyle\frac{\partial }{\partial p_\alpha},
\end{aligned}
\end{equation}

\noindent for all $\alpha$ and $\beta$. Thus, if $\{{\mathcal
Y}^\alpha,{\mathcal P}_\alpha\}$ is the dual basis of $\{{\mathcal
Y}_\alpha,{\mathcal P}^\alpha\}$, then
\begin{align*}
d^{{\mathcal T}^EE^*}f&= \rho_\alpha^i\displaystyle\frac{\partial
f}{\partial x^i}{\mathcal Y}^\alpha+ \displaystyle\frac{\partial
f}{\partial p_\alpha} {\mathcal
P}_\alpha,\\
d^{{\mathcal T}^EE^*}{\mathcal Y}^\gamma&= \displaystyle
-\frac{1}{2} {\mathcal C}_{\alpha\beta}^{\gamma} {\mathcal
Y}^\alpha \wedge {\mathcal Y}^\beta,\\
d^{{\mathcal T}^EE^*}{\mathcal P}_\gamma&=0,
\end{align*}

\noindent for $f\in C^\infty(E^*).$

We may introduce a canonical section $\lambda_E$ of the vector
bundle $({\mathcal T}^EE^*)^*\to E^*$ as follows. If $e^*\in E^*$
and $(\tilde{e},v)$ is a point of the fiber of ${\mathcal T}^EE^*$
over $e^*$, then
$$\lambda_E(e^*)(\tilde{e},v)=\langle e^*,\tilde{e}\rangle,$$
where $\langle\cdot,\cdot\rangle$ denotes the natural pairing
between $E^*$ and $E$. $\lambda_E$ is called \emph{the Liouville
section} of $({\mathcal T}^EE^*)^*$.

Now, \emph{the canonical symplectic section} $\Omega_E$ is the
nondegenerate closed 2-section defined by
$$\Omega_E=-d^{{\mathcal T}^EE^*} \lambda_E.$$
Then, we have that the map $\Omega_E^\flat:\T^EE^*\to (\T^EE^*)^*$
defined as
\begin{equation}\label{omegaflat}
\Omega_E^\flat(X)=i_X\Omega_E,
\end{equation}
for all $X\in \T^EE^*$, where $i_X$ denote the contraction by $X$,
is a vector bundles isomorphism.

In local coordinates,
$$\lambda_E(x^i,p_\alpha)=p_\alpha{\mathcal Y}^\alpha ,$$
\begin{equation}\label{locsym}
\Omega_E(x^i,p_\alpha)={\mathcal Y}^\alpha\wedge {\mathcal
P}_\alpha + \frac{1}{2} {\mathcal C}_{\alpha\beta}^\gamma p_\gamma
{\mathcal Y}^\alpha\wedge {\mathcal Y}^\beta .
\end{equation}

\begin{remark} The linear Poisson bracket $\{\cdot,\cdot\}_{E^*}$
on $E^*$ induced by the Lie algebroid structure on $E$ (see
(\ref{corPoisson})) can be also defined in terms of the canonical
symplectic 2-section $\Omega_E$. In fact, for $F, G \in
C^\infty(E^*)$, we have that
$$\{F,G\}_{E^*}=\Omega_E((\Omega_E^\flat)^{-1}(d^{\T^EE^*}F),(\Omega_E^\flat)^{-1}(d^{\T^EE^*}G)).$$

\end{remark}

}
\end{example}

\subsection{Dirac structures}

In this section we briefly recall the definition and some
properties of Dirac structures on vector spaces, vector bundles
and manifolds (see \cite{C,CW}). The construction of a Dirac
structure will be reviewed, which will be important for defining
implicit Lagrangian systems.

Let $V$ be an $n$-dimensional vector space, $V^*$ be its dual
space, and let $\langle\cdot,\cdot\rangle$ be the natural pairing
between $V^*$ and $V$. A \emph{Dirac structure} on $V$ is a
subspace $\D\subset V\oplus V^*$ such that $\D=\D^\perp$, where
$\D^\perp$ is the orthogonal complement of $\D$, that is,
$$\D^\perp=\{(u,\beta)\in V\oplus V^* \,|\,
\langle\beta,v\rangle+\langle\alpha,u\rangle=0, \text{ for all
}(v,\alpha)\in\D\}.$$

It is easy to prove that a vector subspace $\D\subset V\oplus V^*$
is a Dirac structure on $V$ if and only if $\dim \D=n$ and
$\langle\alpha,\bar{v}\rangle+\langle\bar{\alpha},v\rangle=0$, for
all $(v,\alpha),(\bar{v},\bar{\alpha})\in \D$. From the definition
of a Dirac structure, for each $(v,\alpha)\in\D$, we have that
$\langle\alpha,v\rangle=0$.

If $V$ is a vector bundle over a manifold $Q$, let $V\oplus_Q V^*$
be the Whitney sum bundle over $Q$, that is, it is the bundle over
the base $Q$ and with fiber over the point $x\in Q$ equal to
$V_x\times V^*_x$, where $V_x$ (respectively, $V^*_x$) is the
fiber of $V$ (respectively, $V^*$) at the point $x$. A \emph{Dirac
structure} on $V$ is a subbundle $\D\subset V\oplus_Q V^*$ that is
a Dirac structure in the sense of vector spaces at each point
$x\in Q$.

Now, let $M$ be a smooth differentiable manifold and $\tau_M:TM\to
M$ its tangent bundle. An \emph{almost} (in the terminology of
\cite{YM1}) \emph{or generalized} (in the terminology of
\cite{DVS}) \emph{Dirac structure} on $M$ is a subbundle
$\D\subset TM\oplus_M T^*M$ which is a Dirac structure in the
sense of vector bundles.

In geometric mechanics, almost Dirac structures provide a
simultaneous generalization of both 2-forms (not necessarily
closed and possibly degenerate) as well as almost Poisson
structures (that is, bracket that need not satisfy the Jacobi
identity). A \emph{Dirac structure} on $M$ is an almost Dirac
structure that additionally satisfies the following integrability
condition
$$\langle\pounds_{X_1}\alpha_2,X_3\rangle+\langle\pounds_{X_2}\alpha_3,X_1\rangle+\langle\pounds_{X_3}\alpha_1,X_2\rangle=0,$$
for all $(X_1,\alpha_1),(X_2,\alpha_2),(X_3,\alpha_3)\in\D$, and
where $\pounds_X$ denotes the usual Lie derivative with respect to
the vector field $X$. This generalizes closedness for the
symplectic form, and the Jacobi identity for Poisson structures.
For the remainder of this paper, we will primarily be concerned
with almost Dirac structures, since it allows one to incorporate
nonholonomic constraints.

Two constructions of almost Dirac structures on a manifold are
given as follows. The first construction is induced by a
distribution and a 2-form on the manifold. Let $M$ be a manifold,
$\Omega$ be a 2-form on $M$ and $\Delta_M$ be a distribution on
$M$. Denote by $\Omega^\flat$ the associated flat map and by
$\Delta_M^\circ\subset T^*M$ the annihilator of $\Delta_M$. Then,
from Theorem 2.3 in \cite{YM1}, we have that $\D_M\subset
TM\oplus_M T^*M$ defined, for each $x\in M$, by
\begin{align*}
\D_M(x)=&\{(v_x,\alpha_x)\in T_xM\times T^*_xM \,|\,
v_x\in\Delta_M(x) \text{ and }\\
&\qquad\qquad \alpha_x-\Omega^\flat(x)(v_x)\in\Delta_M^\circ(x)\}
\end{align*}
is an almost Dirac structure on $M$ (see also Theorem 3.2 in
\cite{DVS}).

For the case when $M=T^*Q$ and $\Omega=\Omega_{T^*Q}$ is the
canonical symplectic 2-form, this almost Dirac structure was used
to introduce the notion of implicit Lagrangian systems in standard
mechanics (see \cite{YM1,YM2}).

The second almost Dirac structure is induced by a codistribution
and a skew-symmetric 2-tensor on the manifold. Let $M$ a manifold,
$\Pi:T^*M\times T^*M\to\R$ be a skew-symmetric 2-tensor and
$\Delta^*_M$ a codistribution on $M$. Denote by
$\sharp_{\Pi}:T^*M\to TM$ the associated sharp map and by $\Ker
\Delta^*_M$ the distribution on $M$ defined as
$$\Ker \Delta^*_M=\{ X\in TM \,|\, \alpha(X)=0, \text{ for all
}\alpha\in\Delta^*_M\}.$$ Then, from Theorem 2.4 in \cite{YM1}, we
have that $\D_M\subset TM\oplus_M T^*M$ defined, for each $x\in
M$, by
\begin{align*}
\D_M(x)=&\{(v_x,\alpha_x)\in T_xM\times T^*_xM \,|\,
\alpha_x\in\Delta_M^*(x) \text{ and }\\
&\qquad\qquad v_x-\sharp_\Pi(x)(\alpha_x)\in\Ker
\Delta_M^\circ(x)\}
\end{align*}
is an almost Dirac structure on $M$ (see also Theorem 3.1 in
\cite{DVS}).

For the case when $M=T^*Q$ and $\Pi=\Pi_{T^*Q}$ is the canonical
Poisson structure on $T^*Q$, this almost Dirac structure coincides
with the almost Dirac structure described before which was used to
introduce the notion of implicit Lagrangian systems in standard
mechanics (see \cite{YM1,YM2}).

\section{Implicit Lagrangian systems on a Lie
algebroid}\label{SecILS}

\subsection{Induced almost Dirac structure}
First, we introduce the notion of an \emph{induced almost Dirac
structure} on the Lie algebroid prolongation $\T^EE^*$ of a Lie
algebroid $\tau:E\to Q$. This almost Dirac structure is induced by
a vector subbundle $\U$ of $E$, that is, $\U\subset E$ such that
$\tau_\U=\tau_{|\U}:\U\to Q$ is a vector bundle.

Consider the dual vector bundle $\tau^*:E^*\to Q$ of $\vb$. We can
define its prolongation to the corresponding prolongation Lie
algebroids $\T\tau^*:\T^EE^*\to \T^EQ$ as the identity in the
first component and the tangent map of $\tau^*$ in the second,
that is, $\T\tau^*=(id, T\tau^*)$. It is easy to prove that it is
a Lie algebroid morphism between $\T^EE^*\to E^*$ and $\T^EQ\to Q$
(see \cite{M1} for a general definition of the prolongation of a
map). Moreover, we can identify $\T^EQ$ with $E$ and then
$\T\tau^*\equiv \pr_1$, $\pr_1:\T^EE^*\to E$ being the projection
on the first factor.

The vector subbundle $\U$ can be lifted to a vector subbundle
$\U_{\T^EE^*}\subset \T^EE^*$ as follows
\begin{equation}\label{defUTEE}
\U_{\T^EE^*}=(\pr_1)^{-1}(\U).
\end{equation}
Denote by $\U^\circ_{\T^EE^*}\subset (\T^EE^*)^*$ its annihilator.
Then, we have the following result.

\begin{theorem}\label{thDU} Let $\alg$ be a Lie algebroid over a manifold $Q$ and
$\U$ be a vector subbundle of $E$. For each $e^*\in E^*$, let
\begin{equation}\label{defDU}
\begin{aligned}
\D_\U(e^*)=&\{(X_{e^*},\alpha_{e^*})\in\T^E_{e^*}E^*\times
(\T^E_{e^*}E^*)^* \,|\, X_{e^*}\in \U_{\T^EE^*}(e^*) \text{ and }\\
&\qquad\qquad \alpha_{e^*}-\Omega_E^\flat(e^*)(X_{e^*})\in
\U_{\T^EE^*}^\circ(e^*)\}.
\end{aligned}
\end{equation}
Then, $\D_\U\subset \T^EE^*\oplus_{E^*}(\T^EE^*)^*$ is an almost
Dirac structure on $\T^EE^*$.
\end{theorem}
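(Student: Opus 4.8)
The plan is to reduce the statement to a purely pointwise, linear-algebraic verification, since by definition an almost Dirac structure on the vector bundle $\T^EE^*\to E^*$ is a subbundle of $\T^EE^*\oplus_{E^*}(\T^EE^*)^*$ that restricts to a Dirac structure on each fiber. Accordingly, I would fix $e^*\in E^*$ and set $V=\T^E_{e^*}E^*$, a vector space of dimension $2n$ (the rank of $\T^EE^*$), together with the subspace $W=\U_{\T^EE^*}(e^*)\subset V$ and the skew-symmetric bilinear form $\omega=\Omega_E(e^*)$ on $V$ with associated flat map $\omega^\flat=\Omega_E^\flat(e^*):V\to V^*$. With this notation, $\D_\U(e^*)=\{(X,\alpha)\in V\times V^*\mid X\in W,\ \alpha-\omega^\flat(X)\in W^\circ\}$, which is precisely the linear construction of Theorem 2.3 in \cite{YM1} applied to $(V,\omega,W)$; the only point to note is that this construction goes through verbatim when $V$ is a fiber of a Lie algebroid rather than a tangent space, which it does because it uses nothing but the skew-symmetry of $\omega$.

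To confirm that $\D_\U(e^*)$ is a Dirac structure I would invoke the criterion recalled in the Dirac-structures subsection: a subspace is Dirac if and only if it is isotropic for the canonical pairing and has dimension equal to $\dim V$. For isotropy, take $(X,\alpha),(\bar X,\bar\alpha)\in\D_\U(e^*)$ and write $\alpha=\omega^\flat(X)+\beta$, $\bar\alpha=\omega^\flat(\bar X)+\bar\beta$ with $\beta,\bar\beta\in W^\circ$. Since $X,\bar X\in W$, the terms $\langle\beta,\bar X\rangle$ and $\langle\bar\beta,X\rangle$ vanish, so $\langle\alpha,\bar X\rangle+\langle\bar\alpha,X\rangle=\omega(X,\bar X)+\omega(\bar X,X)=0$ by skew-symmetry of $\omega$. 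For the dimension, the projection $(X,\alpha)\mapsto X$ maps $\D_\U(e^*)$ onto $W$ (as $(X,\omega^\flat(X))\in\D_\U(e^*)$) with each fiber a coset of $W^\circ$, whence $\dim\D_\U(e^*)=\dim W+\dim W^\circ=\dim V=2n$. Together these give that every fiber is a Dirac (Lagrangian) subspace.

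It remains to upgrade this pointwise conclusion to a statement about subbundles, namely to check smoothness and constant rank; this is the one place where something must be verified beyond the Yoshimura--Marsden setting, and where I would be most careful. The key point is that $\U_{\T^EE^*}=(\pr_1)^{-1}(\U)$ is an honest vector subbundle of $\T^EE^*$, which holds because $\pr_1:\T^EE^*\to E$ is fibrewise surjective: given $e'\in E_{\tau^*(e^*)}$ one can solve $\rho(e')=(T\tau^*)(v)$ for some $v\in T_{e^*}E^*$, as $\tau^*$ is a submersion, so the preimage of the subbundle $\U$ has locally constant rank. Consequently $\U^\circ_{\T^EE^*}$ is a subbundle, $\Omega_E^\flat$ is a smooth bundle map, and the fibrewise description exhibits $\D_\U$ as a smooth subbundle of constant rank $2n$ that is Dirac on every fiber, hence an almost Dirac structure on $\T^EE^*$. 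I expect no further difficulty, since the construction uses only the skew-symmetry of $\Omega_E$ — its nondegeneracy, recorded above, plays no role here.
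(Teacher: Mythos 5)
Your proof is correct, but it takes a genuinely different route from the paper's. The paper establishes $\D_\U(e^*)=\D_\U^\perp(e^*)$ by double inclusion: the inclusion $\D_\U(e^*)\subset\D_\U^\perp(e^*)$ is the same skew-symmetry computation you perform, but the reverse inclusion is proved by testing an arbitrary $(Y_{e^*},\beta_{e^*})\in\D_\U^\perp(e^*)$ against specially chosen elements of $\D_\U(e^*)$ --- first the pairs $(0,\alpha_{e^*})$ with $\alpha_{e^*}\in\U_{\T^EE^*}^\circ(e^*)$, which force $Y_{e^*}\in\U_{\T^EE^*}(e^*)$, and then pairs $(X_{e^*},\alpha_{e^*})$ with $\alpha_{e^*}-\Omega_E^\flat(e^*)(X_{e^*})\in\U_{\T^EE^*}^\circ(e^*)$, which force $\beta_{e^*}-\Omega_E^\flat(e^*)(Y_{e^*})\in\U_{\T^EE^*}^\circ(e^*)$. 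You instead invoke the criterion recalled in the paper's preliminaries (isotropy plus $\dim\D=\dim V$) and replace the harder inclusion by a dimension count, using the linear isomorphism $(X,\beta)\mapsto\bigl(X,\Omega_E^\flat(e^*)(X)+\beta\bigr)$ from $\U_{\T^EE^*}(e^*)\times\U_{\T^EE^*}^\circ(e^*)$ onto $\D_\U(e^*)$, giving $\dim\D_\U(e^*)=2n$. Both arguments are fibrewise-linear and use only the skew-symmetry of $\Omega_E$ (your closing remark that nondegeneracy is irrelevant is accurate for the paper's proof as well); yours is mechanically simpler, avoiding the selection of test elements, while the paper's has the mild merit of exhibiting $\D_\U^\perp$ explicitly. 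You are also more careful than the paper on the bundle-theoretic point: the paper merely asserts that $\D_\U$ is a subbundle ``since $\U_{\T^EE^*}$ is,'' whereas you justify that $\U_{\T^EE^*}=(\pr_1)^{-1}(\U)$ has locally constant rank via the fibrewise surjectivity of $\pr_1$, which holds because $\tau^*$ is a submersion --- a point the paper only makes concrete later, via the adapted local basis $\{\Y_a,\P^a,\P^A\}$ of sections of $\U_{\T^EE^*}$.
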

\begin{proof}

First, it is not difficult to prove that, since $\U_{\T^EE^*}$ is
a vector subbundle of $\T^EE^*$, $\D_{\U}$ is a vector subbundle
of $\T^EE^*\oplus_{E^*} (\T^EE^*)^*$.

Second, the orthogonal of $\D_{\U}\subset \T^EE^*\oplus_{E^*}
(\T^EE^*)^*$ is given at $e^*\in E^*$ by
\begin{align*}
\D_\U^\perp(e^*)=&\{(Y_{e^*},\beta_{e^*})\in\T^E_{e^*}E^*\times
(\T^E_{e^*}E^*)^* \,|\,
\alpha_{e^*}(Y_{e^*})+\beta_{e^*}(X_{e^*})=0, \\
&\qquad\qquad \text{ for all } X_{e^*}\in \U_{\T^EE^*}(e^*) \text{
and } \alpha_{e^*}-\Omega_E^\flat(e^*)(X_{e^*})\in
\U_{\T^EE^*}^\circ(e^*)\}.
\end{align*}

To check that $\D_{\U}(e^*)\subset \D_{\U}^\perp (e^*)$, we
consider $(X_{e^*},\alpha_{e^*})\in \D_{\U}(e^*)$ and then, for
any $(X_{e^*}',\alpha_{e^*}')\in \D_{\U}(e^*)$, we have that
$$
\alpha_{e^*}(X'_{e^*})+\alpha'_{e^*}(X_{e^*})=\Omega_E(e^*)(X_{e^*},X_{e^*}')+\Omega_E(e^*)(X_{e^*}',X_{e^*})=0,$$
by the skew-symmetry of $\Omega_E$. This implies that
$(X_{e^*},\alpha_{e^*})\in \D_{\U}^\perp(e^*)$. Therefore,
\begin{equation}\label{subset1}
\D_{\U}(e^*)\subset \D_{\U}^\perp (e^*).
\end{equation}

Now, to prove that $\D_{\U}^\perp(e^*)\subset \D_{\U}(e^*)$, let
$(Y_{e^*},\beta_{e^*})\in \D_{\U}^\perp (e^*)$. Then, we have that
\begin{equation}\label{condperp}
\alpha_{e^*}(Y_{e^*})+\beta_{e^*}(X_{e^*})=0, \end{equation}  for
all $(X_{e^*},\alpha_{e^*})\in\T^E_{e^*}E^*\times
(\T^E_{e^*}E^*)^*$ such that $X_{e^*}\in \U_{\T^EE^*}(e^*)$ and
$\alpha_{e^*}-\Omega_E^\flat(e^*)(X_{e^*})\in
\U_{\T^EE^*}^\circ(e^*)$. If we choose $X_{e^*}=0$ and
$\alpha_{e^*}\in \U_{\T^EE^*}^\circ (e^*)$, then
$(X_{e^*},\alpha_{e^*})\in\D_{\U}(e^*)$. Therefore, using
(\ref{condperp}), we obtain that $\alpha_{e^*}(Y_{e^*})=0$, for
all $\alpha_{e^*}\in \U_{\T^EE^*}^\circ (e^*)$. Then, we conclude
$Y_{e^*}\in\U_{\T^EE^*}(e^*)$. On the other hand, let
$X_{e^*}\in\U_{\T^EE^*}(e^*)$ be arbitrary and suppose that
$\alpha_{e^*}(Z_{e^*})=\Omega_E(e^*)(X_{e^*},Z_{e^*})$, for all
$Z_{e^*}\in\U_{\T^EE^*}(e^*)$. Since
$Y_{e^*}\in\U_{\T^EE^*}(e^*)$, we have
$\alpha_{e^*}(Y_{e^*})=\Omega_E(e^*)(X_{e^*},Y_{e^*})$ and, from
(\ref{condperp}), we deduce that
$$
\Omega_E(e^*)(X_{e^*},Y_{e^*})+\beta_{e^*}(X_{e^*})=0,
$$ for all $X_{e^*}\in\U_{\T^EE^*}(e^*)$. This implies that
$\beta_{e^*}-\Omega_E^\flat(Y_{e^*})\in\U_{\T^EE^*}^\circ(e^*)$.
Therefore, $(Y_{e^*},\beta_{e^*})\in \D_{\U}(e^*)$ and thus
\begin{equation}\label{subset2}
\D_{\U}^\perp(e^*)\subset \D_{\U}(e^*).
\end{equation}

Given (\ref{subset1}) and (\ref{subset2}), we conclude that
$\D_{\U}^\perp(e^*)= \D_{\U}(e^*)$, and the result follows.

\end{proof}

In what follows, we will obtain a local representation of the
almost Dirac structure $\D_{\U}$ induced on $\T^E E^*$ by a vector
subbundle $\U$ of $E$. Consider local coordinates $(x^i)$ on $Q$,
a local basis $\{e_\alpha\}$ of sections of $E$ and the
corresponding local coordinates $(x^i,y^\alpha)$ on $E$. Let
$\{\Y_\alpha,\P^\alpha\}$ be the local basis of
$\tau^{\tau^*}:\T^EE^*\to E^*$ defined by (\ref{basisTEE*})
induced by the local coordinates $(x^i)$ on $Q$ and the local
basis $\{e_\alpha\}$ of $E$ and $(x^i,p_\alpha;z^\alpha,u_\alpha)$
be the induced local coordinates on $\T^EE^*$.

Thus, we can locally represent the fiber of $\U_{\T^EE^*}$ at a
point $(x^i,p_\alpha)\in E^*$ as
$$\U_{\T^EE^*}(x^i,p_\alpha)=\{(x^i,p_\alpha;z^\alpha,u_\alpha)
\,|\, (x^i,z^\alpha)\in\U(x^i)\}.$$ If we denote by
$(x^i,p_\alpha;r_\alpha,v^\alpha)$ the corresponding local
coordinates induced on $(\T^EE^*)^*$ by the dual basis
$\{\Y^\alpha,\P_\alpha\}$ of $\{\Y_\alpha,\P^\alpha\}$, then the
annihilator of $\U_{\T^EE^*}$ is locally given by
$$\U_{\T^EE^*}^\circ(x^i,p_\alpha)=\{(x^i,p_\alpha;r_\alpha,v^\alpha)
\,|\, v^\alpha=0 \text{ and } (x^i,r_\alpha)\in\U^\circ(x^i) \}.$$

From (\ref{locsym}), we have that
\begin{equation}\label{localOmegaflat}
\Omega_E^\flat(x^i,p_\alpha)(x^i,p_\alpha;z^\alpha,u_\alpha)=(x^i,p_\alpha;-u_\alpha
-\C_{\alpha\beta}^\gamma p_\gamma z^\beta,z^\alpha)
\end{equation}
and then the condition
$\alpha_{e^*}-\Omega_E^\flat(e^*)(X_{e^*})\in
\U^\circ_{\T^EE^*}(e^*)$ can be written locally as
$$v^\alpha=z^\alpha\quad\text{and}\quad (x^i,r_\alpha+u_\alpha+\C_{\alpha\beta}^\gamma p_\gamma
z^\beta)\in\U^\circ(x^i),$$ where
$X_{e^*}\equiv(x^i,p_\alpha;z^\alpha,u_\alpha)$ and
$\alpha_{e^*}\equiv(x^i,p_\alpha;r_\alpha,v^\alpha)$.

Finally, we obtain that
\begin{equation}\label{locDU}
\begin{aligned}
\D_\U(e^*)=&\{(X_{e^*},\alpha_{e^*})\in\T^E_{e^*}E^*\times
(\T^E_{e^*}E^*)^* \,|\, X_{e^*}\in \U_{\T^EE^*}(e^*) \text{ and }\\
&\qquad\qquad \alpha_{e^*}-\Omega_E^\flat(e^*)(X_{e^*})\in
\U_{\T^EE^*}^\circ(e^*)\}\\
 =&\{ ((x^i,p_\alpha;z^\alpha,u_\alpha),(x^i,p_\alpha;r_\alpha,v^\alpha)) \,|\,
 (x^i,z^\alpha)\in\U(x^i),\\
 &\qquad\qquad v^\alpha=z^\alpha \text{ and } (x^i,r_\alpha+u_\alpha+\C_{\alpha\beta}^\gamma p_\gamma
z^\beta)\in\U^\circ(x^i)\} .
\end{aligned}
\end{equation}

\begin{remark}\label{rmcoor} One of the advantages of working in the Lie algebroids
setting is that we can construct a local basis $\{e_\alpha\}$ of
sections of $E$ as follows. We take a local basis $\{e_a\}$ of
sections of the vector bundle $\tau_{\U}:\U\to Q$ and complete it
to a basis $\{e_a,e_A\}$ of local sections of $E$. In this way, we
have coordinates $(x^i,y^\alpha)=(x^i,y^a,y^A)$ on $E$. In this
set of coordinates, the equations which define the subbundle $\U$
are $y^A=0$. So, we can consider $(x^i,y^a)$ as local coordinates
on $\U$. Moreover, if $\{e^a,e^A\}$ is the dual basis of
$\{e_a,e_A\}$ of $E^*$, then $\{e^A\}$ is a local basis of
sections of $\U^\circ$. So, from the definition of $\U_{\T^EE^*}$,
we deduce that $\{\Y_a,\P^a,\P^A\}$ is a local basis of sections
of $\U_{\T^EE^*}\to E^*$ and, if $\{\Y^a,\Y^A,\P_a,\P_A\}$ is the
dual basis of $\{\Y_a,\Y_A,\P^a,\P^A\}$, then $\{\Y^A\}$ is a
local basis of $\U_{\T^EE^*}^\circ$. Therefore, a local
representation for the almost Dirac structure $\D_\U$ is
\begin{align*}
\D_\U(x^i,p_\alpha)=&\{
((x^i,p_\alpha;z^\alpha,u_\alpha),(x^i,p_\alpha;r_\alpha,v^\alpha))
\,|\, z^A=v^A=0, \; v^a=z^a,\\
 &\qquad\qquad \text{ and } r_a=-u_a-\C_{ab}^\gamma p_\gamma
z^b\} .
\end{align*}
\end{remark}

We have used the canonical symplectic section $\Omega_E$ on
$\T^EE^*$ together with a vector subbundle $\U\subset E$ to define
the almost Dirac structure $\D_\U$. However there is a dual
version of the above construction in which the almost Dirac
structure is defined by a Poisson structure on $\T^EE^*$ together
with a vector subbundle $\U\subset E$.

Let $\alg$ be a Lie algebroid and $\U$ be a vector subbundle of
$E$. Consider the projection $\pi^2:(\T^EE^*)^*\to E$ defined as
$\pi^2=\pr_1\comp (\Omega_E^\flat)^{-1}$, where $\pr_1:\T^EE^*\to
E$ is the projection on the first factor. If we consider local
coordinates as before, using (\ref{localOmegaflat}), we have that
$$\pi^2(x^i,p_\alpha;r_\alpha,v^\alpha)=(x^i,v^\alpha).$$
Now, we define the induced vector subbundle $\U^*_{\T^EE^*}$ of
$(\T^EE^*)^*$ by
$$\U^*_{\T^EE^*}=(\pi^2)^{-1}(\U).$$
Note that $\U^*_{\T^EE^*}=\Omega_E^\flat(\U_{\T^EE^*})$, from the
definition of $\U_{\T^EE^*}$ (see (\ref{defUTEE})). Locally,
$\U_{\T^EE^*}^*$ is given by
$$\U_{\T^EE^*}^*(x^i,p_\alpha)=\{(x^i,p_\alpha;r_\alpha,v^\alpha)
\,|\, (x^i,v^\alpha)\in\U(x^i)\}.$$ The annihilator of
$\U^*_{\T^EE^*}$ is given, for each $e^*\in E^*$, by
\begin{align*}
(\U^*_{\T^EE^*})^\circ(e^*)=&\{ X_{e^*}\in \T^E_{e^*}E^* \,|\,
\alpha_{e^*}(X_{e^*})=0, \text{ for all
}\alpha_{e^*}\in\U^*_{\T^EE^*}(e^*)\}\\
=&\{ X_{(x^i,p_\alpha)}=(x^i,p_\alpha;z^\alpha,u_\alpha) \,|\,
z^\alpha=0 \text{ and }(x^i,u_\alpha)\in\U^\circ(x^i)\}.
\end{align*}

On the other hand, we introduce the section $\Pi$ of the vector
bundle $\wedge^2(\T^EE^*)\to E^*$ defined by
$$\Pi(\alpha,\beta)=\Omega_E((\Omega_E^\flat)^{-1}(\alpha),(\Omega_E^\flat)^{-1}(\beta)),$$
for $\alpha,\beta\in (\T^EE^*)^*$. $\Pi$ is the algebraic Poisson
structure on the vector bundle $\T^EE^*\to E^*$ associated with
the symplectic section $\Omega_E$. Denote by
$\sharp_\Pi:(\T^EE^*)^*\to\T^EE^*$ the vector bundles morphism
given by
$$\sharp_\Pi(\alpha)=-i_\alpha\Pi, \text{ for
}\alpha\in(\T^EE^*)^*.$$ Note that
$\sharp_\Pi=(\Omega_E^\flat)^{-1}$.

Then, using the above notation, the induced almost Dirac structure
$\D_\U$ on $\T^EE^*$ is given, for $e^*\in E^*$, by
\begin{align*}
\D_\U(e^*)=&\{(X_{e^*},\alpha_{e^*})\in\T^E_{e^*}E^*\times
(\T^E_{e^*}E^*)^* \,|\, \alpha_{e^*}\in \U_{\T^EE^*}^*(e^*) \text{
and
}\\
&\qquad\qquad X_{e^*}-\sharp_\Pi(e^*)(\alpha_{e^*})\in
(\U_{\T^EE^*}^*)^\circ(e^*)\},
\end{align*}
whose local representation is
\begin{align*}
\D_\U(x^i,p_\alpha)=&\{((x^i,p_\alpha;z^\alpha,u_\alpha),(x^i,p_\alpha;r_\alpha,v^\alpha))
\,|\, (x^i,v^\alpha)\in \U(x^i),\\
&\qquad\qquad v^\alpha=z^\alpha \text{ and } (x^i,
r_\alpha+u_\alpha+\C_{\alpha\beta}^\gamma p_\gamma z^\beta)\in
\U^\circ(x^i)\},
\end{align*}
which coincides with (\ref{locDU}).

\subsection{Implicit Lagrangian systems on a Lie
algebroid}\label{secILS} In this section, an implicit Lagrangian
system on a Lie algebroid $E$ is defined in the context of the
induced almost Dirac structure $\D_\U$ on $\T^EE^*$. As we shall
see, the notion of implicit Lagrangian systems that is developed
here can handle systems with degenerate Lagrangians as well as
systems with nonholonomic constraints. Another description to
address these systems was recently presented by Grabowska and
Grabowski in \cite{GG}, where they use the notion of a Lie
algebroid as a double vector bundle morphism.

Let $L:E\to\R$ be a Lagrangian function on the Lie algebroid
$\alg$.

First of all, we will recall the definition of the Legendre
transformation in the context of Lie algebroids. Given a
Lagrangian function $L:E\to \R$, one can consider the
Poincar\'{e}-Cartan $1$-section associated with $L$, $\theta_L\in
\Gamma((\T^EE)^*)$, which is given
$$\theta_L(e)(Z_e)=(d^{\T^EE}
L(e))(S_e(Z_e))=\rho^\tau(S_e(Z_e))(L),$$
for $e\in E$ and $Z_e\in \T^E_eE$, $S:\T^EE\to \T^EE$ being the
vertical endomorphism defined in (\ref{endverlo}). So, \emph{the
Legendre transformation associated with $L$} is defined as the
smooth map $\FL:E\to E^*$ defined by
$$\FL(e)(e')=\theta_L(e)(Z),$$
for $e,e'\in E,$ where $Z\in \T^E_eE$ such that $\pr_1(Z)=e'$,
$\pr_1:\T^EE\to E$ being the canonical projection over the first
factor. For more details see \cite{LMM}.

The map $\FL$ is well-defined and its local expression in fiber
coordinates on $E$ and  $E^*$ is
$$\FL(x^i,y^\alpha)=\Big(x^i,\frac{\partial L}{\partial
y^\alpha}\Big).$$

Now, we consider the isomorphism $A_E:\T^EE^*\to(\T^EE)^*$ between
the vector bundles $\pr_1:\T^EE^*\to E$ and
$(\tau^\tau)^*:(\T^EE)^*\to E$ introduced in \cite{LMM} and whose
local expression is
\begin{equation}\label{localAE}
A_E(x^i,p_\alpha;z^\alpha,u_\alpha)=(x^i,z^\alpha;u_\alpha+\C_{\alpha\beta}^\gamma
p_\gamma z^\beta,p_\alpha). \end{equation}

Then, we define the map $\gamma_E:(\T^EE)^*\to (\T^EE^*)^*$ as
$\gamma_E=\Omega_E^\flat\comp A_E^{-1}$ which is an isomorphism
between the vector bundles $(\tau^\tau)^*:(\T^EE)^*\to E$ and
$\pr_1^*:(\T^EE^*)^*\to E$. From (\ref{localOmegaflat}) and
(\ref{localAE}), we deduce that the local expression of this
isomorphism is
\begin{equation}\label{locgammaE}
\gamma_E(x^i,y^\alpha;s_\alpha,w_\alpha)=(x^i,w_\alpha;-s_\alpha,y^\alpha).
\end{equation}

Now, define a differential operator \textbf{D} acting on the
Lagrangian $L:E\to \R$, which we shall call the \emph{Dirac
differential} of $L$ by
$$\textbf{D}L: E\to (\T^EE^*)^*, \quad \textbf{D}L=\gamma_E\comp
d^{\T^EE}L,$$ where $d^{\T^EE}L$ is the differential of $L$ on the
Lie algebroid $\T^EE$ which is a section of
$(\tau^{\tau})^*:(\T^EE)^*\to E$.

Using (\ref{diff0}), (\ref{locbasisTEE}) and (\ref{locgammaE}), we
conclude that $\textbf{D}L$ is represented in local coordinates by
\begin{equation}\label{locDL}
\textbf{D}L(x^i,y^\alpha)=\Big(x^i,\displaystyle\frac{\partial
L}{\partial y^\alpha}; -\rho_\alpha^i\frac{\partial L}{\partial
x^i},y^\alpha\Big).
\end{equation}

Now, we have all the ingredients to define an implicit Lagrangian
system on a Lie algebroid.

\begin{definition} Let $L:E\to\R$ be a given Lagrangian function (possibly
degenerate) on a Lie algebroid $\alg$ and $\U\subset E$ be a given
vector subbundle of $\tau:E\to Q$. Denote by $\D_\U$ the induced
almost Dirac structure on the Lie algebroid prolongation $\T^EE^*$
that is given by (\ref{defDU}) and $\textbf{D}L:E\to(\T^EE^*)^*$
the Dirac differential of $L$. Let $P=\FL(\U)\subset E^*$ be the
image of $\U$ under the Legendre transformation.

An {\bf implicit Lagrangian system} is a triple $(L,\U,X)$, where
$X$ is a section of the Lie algebroid prolongation
$\tau^{\tau^*}:\T^EE^*\to E^*$ defined at the points of $P$,
together with the condition
$$(X,\textbf{D}L)\in\D_\U.$$
In other words, as $P=\FL(\U)\subset E^*$, $X$ can be seen as a
section of $\T^EE^*\to E\oplus_Q E^*$ defined at the points of
$\U\oplus_Q P$ and thus, we require that for each point $e\in\U$
and with $e^*=\FL(e)\in P$, we have
$$(X(e,{e^*}),\textbf{D}L(e))\in\D_\U(e^*).$$

\end{definition}

\begin{definition} A {\bf solution curve} of an implicit
Lagrangian system $(L,\U,X)$ is a curve $(x(t),y(t))\in\U(x(t))$,
$t_1\leq t \leq t_2$, such that $\FL(x(t),y(t))$ is an integral
curve of the vector field $\rho^{\tau^*}(X)$ on $E^*$,
$\rho^{\tau^*}$ being the anchor map of the Lie algebroid
$\tau^{\tau^*}:\T^EE^*\to E^*$.
\end{definition}

\begin{remark} One can consider the map $i_E:E\to E\oplus_Q E^*$
defined as the direct sum of the identity map on $E$, $id:E\to E$,
and the Legendre transformation $\FL:E\to E^*$. Denote by
${\mathcal K}$ the submanifold of $E\oplus_Q E^*$ defined as the
image of $\U$ under $i_E$. Thus, ${\mathcal K}$ is locally given
by
$${\mathcal K}=\{(x^i,y^\alpha,p_\alpha)\in E_x\times E_x^* \,|\,
(x^i,y^\alpha)\in \U(x^i), \quad
p_\alpha=\displaystyle\frac{\partial L}{\partial y^\alpha}\}.$$

Another way to define the submanifold ${\mathcal K}$ is the
following. Consider the map $\rho_{(\T^EE)^*}:(\T^EE)^*\to
E\oplus_{Q}E^*$ defined as the direct sum of the maps
$(\tau^{\tau})^*:(\T^EE)^*\to E$ and $\tau^{\tau^*}\comp
(A_E)^{-1}:(\T^EE)^*\to E^*$. Recall that
$(\tau^{\tau})^*:(\T^EE)^*\to E$ is the projection of the dual
vector bundle of the Lie algebroid prolongation of $E$ over the
fibration $\tau$, $A_E:\T^EE^*\to (\T^EE)^*$ is the vector bundle
isomorphism defined in (\ref{localAE}) and
$\tau^{\tau^*}:\T^EE^*\to E$ is the projection of the Lie
algebroid prolongation of $E$ over $\tau^*$. If we consider local
coordinates introduced in Section \ref{secprol}, the map
$\rho_{(\T^EE)^*}$ is given by
\begin{equation}\label{locrhoTEE*}
\rho_{(\T^EE)^*}(x^i,y^\alpha,s_\alpha,w_\alpha)=(x^i,y^\alpha,w_\alpha).
\end{equation}
Note that when $E$ is the standard Lie algebroid, that is, $E=TQ$,
then this map is the map $\rho_{T^*TQ}:T^*TQ\to TQ\oplus_Q T^*Q$
defined in \cite{YM1} (see Section 4.10 in \cite{YM1}).

Then, we can construct the map ${i}_E$ between $E$ and $E\oplus_Q
E^*$ by the composition
$${i}_E=\rho_{(\T^EE)^*}\comp
A_E\comp(\Omega_E^\flat)^{-1}\comp\textbf{D}L:E\to
(\T^EE^*)^*\to\T^EE^*\to (\T^EE)^*\to E\oplus_Q E^*,$$
$\textbf{D}L$ being the Dirac differential of the Lagrangian
function $L$ and $\Omega_E^\flat$ being the flat map defined by
the canonical symplectic section $\Omega_E$ (see
(\ref{omegaflat})).

From (\ref{localOmegaflat}), (\ref{localAE}), (\ref{locDL}) and
(\ref{locrhoTEE*}), we have that the local expression of $i_E$ is
$$i_E(x^i,y^\alpha)=\Big(x^i,y^\alpha,\displaystyle\frac{\partial
L}{\partial y^\alpha}\Big).$$

Then, the submanifold ${\mathcal K}\subset E\oplus_Q E^*$ can be
defined as ${\mathcal K}=i_E(\U).$

Then, a solution of an implicit Lagrangian system $(L,\U,X)$ may
be equivalently defined to be a curve $(x(t),y(t),p(t))$, where
$t_1\leq t \leq t_2$, whose image lies in the submanifold
${\mathcal K}\subset E\oplus_Q E^*$ and such that $(x(t),p(t))$ is
an integral curve of $\rho^{\tau^*}(X)$ and such that
$$(X(x(t),y(t),p(t)),\textbf{D}L(x(t),y(t)))\in\D_\U(x(t),p(t)).$$
\end{remark}

Locally, using the preceding notation, (\ref{rhoTEE*}),
(\ref{locDU}) and (\ref{locDL}), we deduce that a solution curve
$(x^i(t),$ $y^\alpha(t),p_\alpha(t))$ for an implicit Lagrangian
system $(L,\U,X)$ must satisfy the following equations
\begin{equation}\label{locIL}
\begin{cases}
(x^i,y^\alpha)\in\U(x^i), \quad \dot{x}^i=\rho_\alpha^i y^\alpha,\quad p_\alpha=\displaystyle\frac{\partial L}{\partial y^\alpha},\\
\Big(x^i, \dot{p}_\alpha+\C_{\alpha\beta}^\gamma p_\gamma y^\beta
-\rho_\alpha^i\displaystyle\frac{\partial L}{\partial
x^i}\Big)\in\U^\circ(x^i).
\end{cases}
\end{equation}

\begin{remark} If we consider the local coordinates on $E$ introduced in Remark
\ref{rmcoor}, the implicit Lagrangian equations reduce to
$$
y^A=0, \quad \dot{x}^i=\rho_a^i y^a, \quad
p_\alpha=\displaystyle\frac{\partial L}{\partial y^\alpha}
\quad\text{and}\quad \dot{p}_a=-\C_{ab}^\gamma
\displaystyle\frac{\partial L}{\partial y^\gamma}
y^b+\rho_a^i\frac{\partial L}{\partial x^i}.$$
\end{remark}

\subsection{Conservation of energy}
Define the \emph{generalized energy} $E_L:E\oplus_Q E^*\to \R$ by
$$E_L(x,e, e^*)=\langle e^*,e\rangle-L(x,e),$$
where $(x,e)\in\U$ and $(x,e^*)\in P$.

\begin{proposition} Let $(x(t),y(t))$, $t_1\leq t\leq t_2$, be an
integral curve of a given implicit Lagrangian system $(L,\U,X)$ on
a Lie algebroid $E$. Then, the function $E_L(x(t),y(t),p(t))$ is
constant in time, where $p(t)=$ $\frac{\partial L}{\partial
y}(x(t),y(t))$.
\end{proposition}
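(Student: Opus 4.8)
The plan is to work entirely in the local coordinates $(x^i,y^\alpha,p_\alpha)$ on $E\oplus_Q E^*$ and to differentiate the generalized energy directly along the solution curve, exploiting the implicit Lagrangian equations (\ref{locIL}). Along the curve the energy reads $E_L(x(t),y(t),p(t)) = p_\alpha y^\alpha - L(x,y)$, with $p_\alpha = \partial L/\partial y^\alpha$ throughout. First I would apply the chain rule to obtain
\begin{equation*}
\frac{d}{dt}E_L = \dot p_\alpha\, y^\alpha + p_\alpha\, \dot y^\alpha - \frac{\partial L}{\partial x^i}\dot x^i - \frac{\partial L}{\partial y^\alpha}\dot y^\alpha .
\end{equation*}

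Next I would use the constitutive relation $p_\alpha = \partial L/\partial y^\alpha$ from (\ref{locIL}) to cancel the two terms containing $\dot y^\alpha$, and then substitute the anchor equation $\dot x^i = \rho^i_\alpha y^\alpha$ for the base velocities. This reduces the derivative to
\begin{equation*}
\frac{d}{dt}E_L = \Big(\dot p_\alpha - \rho^i_\alpha \frac{\partial L}{\partial x^i}\Big) y^\alpha .
\end{equation*}

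The decisive step, which I expect to be the only nontrivial point, is to extract the vanishing of this expression from the annihilator condition in the second line of (\ref{locIL}). That condition asserts that the covector with components $\dot p_\alpha + \C^\gamma_{\alpha\beta} p_\gamma y^\beta - \rho^i_\alpha\, \partial L/\partial x^i$ lies in $\U^\circ(x^i)$. Since the solution curve satisfies $(x^i,y^\alpha)\in\U(x^i)$, so that $(y^\alpha)$ represents a vector in the fiber $\U_x$, I would pair the annihilator covector against $(y^\alpha)$ to get
\begin{equation*}
\dot p_\alpha\, y^\alpha + \C^\gamma_{\alpha\beta} p_\gamma y^\beta y^\alpha - \rho^i_\alpha \frac{\partial L}{\partial x^i} y^\alpha = 0 .
\end{equation*}

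Finally I would observe that the structure-function term vanishes by antisymmetry: $\C^\gamma_{\alpha\beta}$ is skew in $\alpha,\beta$, being defined through the bracket $\lcf e_\alpha,e_\beta\rcf = \C^\gamma_{\alpha\beta}e_\gamma$, so its contraction with the symmetric product $y^\alpha y^\beta$ is zero. The remaining identity is then exactly $(\dot p_\alpha - \rho^i_\alpha\, \partial L/\partial x^i)y^\alpha = 0$, which matches the reduced expression for $dE_L/dt$; hence $dE_L/dt = 0$ along the curve and $E_L$ is constant. A coordinate-free variant would instead evaluate the annihilator section of (\ref{locIL}) on the $\U$-valued velocity and invoke skew-symmetry of the Lie bracket, but since (\ref{locIL}) is already stated locally I would present the coordinate computation as the main line of argument.
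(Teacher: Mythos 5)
Your proposal is correct and follows essentially the same route as the paper's proof: a local-coordinate chain-rule computation of $dE_L/dt$, cancellation of the $\dot y^\alpha$ terms via $p_\alpha=\partial L/\partial y^\alpha$, substitution of $\dot x^i=\rho^i_\alpha y^\alpha$, pairing the $\U^\circ$-valued covector of (\ref{locIL}) against the $\U$-valued velocity $(y^\alpha)$, and killing the structure-function term by skew-symmetry of $\C^\gamma_{\alpha\beta}$. The only difference is cosmetic ordering (the paper adds and subtracts the bracket term before invoking the annihilator condition, whereas you reduce first and then pair), so the two arguments are mathematically identical.
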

\begin{proof} We give the proof using local coordinates. Then, from the definition of the generalized energy $E_L$,
we have that
$$\displaystyle\frac{d E_L}{dt}= \dot{y}^\alpha p_\alpha+ y^\alpha \dot{p}_\alpha
 -\frac{\partial L}{\partial x^i}
\dot{x}^i-\frac{\partial L}{\partial y^\alpha}\dot{y}^\alpha.$$ As
$p_\alpha(t)=\displaystyle\frac{\partial L}{\partial
y^\alpha}(x^i(t),y^\beta(t))$, we deduce that
\begin{align*}
\displaystyle\frac{d E_L}{dt}&= y^\alpha\dot{p}_\alpha
-\displaystyle\frac{\partial L}{\partial x^i} \dot{x}^i\\
&=y^\alpha\Big(
\dot{p}_\alpha+\C_{\alpha\beta}^\gamma\displaystyle\frac{\partial
L}{\partial y^\gamma} y^\beta-\rho_\alpha^i\frac{\partial
L}{\partial
x^i}\Big)-y^\alpha\C_{\alpha\beta}^\gamma\displaystyle\frac{\partial
L}{\partial y^\gamma} y^\beta+y^\alpha\rho_\alpha^i\frac{\partial
L}{\partial x^i}-\frac{\partial L}{\partial x^i} \dot{x}^i.
\end{align*}

Now, as $(x^i(t),y^\alpha(t))$ satisfies the implicit Lagrangian
equations (\ref{locIL}), we know that
$$(x^i,y^\alpha)\in\U(x^i),\quad \dot{x}^i=\rho_\alpha^i
y^\alpha \quad\text{and}\quad
\Big(x^i,\dot{p}_\alpha+\C_{\alpha\beta}^\gamma\displaystyle\frac{\partial
L}{\partial y^\gamma} y^\beta-\rho_\alpha^i\frac{\partial
L}{\partial x^i}\Big)\in\U^\circ(x^i).$$ Moreover, as
$\C_{\alpha\beta}^\gamma=-\C_{\beta\alpha}^\gamma$, the term
$y^\alpha\C_{\alpha\beta}^\gamma\displaystyle\frac{\partial
L}{\partial y^\gamma} y^\beta=0$. So, we conclude that
$$\displaystyle\frac{dE_L}{dt}=0.$$
\end{proof}

\section{Hamilton--Jacobi theory for implicit Lagrangian
systems}\label{SecHJ}

Let $\alg$ be a Lie algebroid over a manifold $Q$ with projection
$\tau:E\to Q$ and $(L,\U,X)$ be an implicit Lagrangian system on
$E$.

\begin{theorem}\label{HJth} Let $\tilde{\gamma}:Q\to E\oplus_Q E^*$ be a section of
the canonical projection $\nu:E\oplus_Q E^*\to Q$ such that
\begin{equation}\label{gammacond1}
\tilde{\gamma}(Q)\subset {\mathcal K},
\end{equation}
and
\begin{equation}\label{gammacond2}
d^E(\pr_{E^*}\comp\tilde{\gamma})_{|\U\times\U}=0.
\end{equation}
Denote by $\sigma\in\Gamma(E)$ the section $\sigma=\pr_1\comp
X\comp\pr_{E^*}\comp \tilde{\gamma}$, where $\pr_1:\T^EE^*\to E$
is the projection on the first factor and $\pr_{E^*}:E\oplus_Q
E^*\to E^*$ the projection over the second component. Then, the
following conditions are equivalent:
\begin{enumerate}
\item For every curve $c:I\to Q$ in $Q$ such that
\begin{equation}\label{curvecond}
\dot{c}(t)=\rho(\sigma)(c(t)), \,\text{ for all }\,t,
\end{equation}
the curve $\tilde{\gamma}\comp c$ is a solution of the implicit
Lagrangian system $(L,\U,X)$.
\item $\tilde{\gamma}$ satisfies the
\textbf{Hamilton--Jacobi equation for implicit Lagrangian
systems}:
\begin{equation}\label{HJeq}
d^E(E_L\comp\tilde{\gamma})\in\U^\circ.
\end{equation}
\end{enumerate}
\end{theorem}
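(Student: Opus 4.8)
The plan is to work entirely in the adapted local coordinates of Remark \ref{rmcoor}, splitting the index $\alpha$ into $\U$-indices $a$ and transverse indices $A$, so that $\U=\{y^A=0\}$ and $\U^\circ=\spn\{e^A\}$. Write $\tilde\gamma(x)=(x^i,\gamma^\alpha(x),\hat\gamma_\alpha(x))$. Hypothesis (\ref{gammacond1}), i.e. $\tilde\gamma(Q)\subset{\mathcal K}$, forces $\gamma^A\equiv 0$ (so $\gamma:=\pr_E\comp\tilde\gamma$ is a section of $\U$) and $\hat\gamma_\alpha=\frac{\partial L}{\partial y^\alpha}(x,\gamma(x))$. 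The first step is to identify $\sigma$: since $X$ belongs to an implicit Lagrangian system we have $(X,\textbf{D}L)\in\D_\U$, and comparing the local descriptions (\ref{locDU}) and (\ref{locDL}) forces the first-factor components $z^\alpha$ of $X$ to equal $v^\alpha=y^\alpha$. Evaluating along $\tilde\gamma$ thus gives $\sigma=\pr_1\comp X\comp\pr_{E^*}\comp\tilde\gamma=\gamma$, so that the curve condition (\ref{curvecond}) reads $\dot x^i=\rho_\alpha^i\gamma^\alpha$.

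Next I would translate both conditions into explicit equations. For (ii), using (\ref{diff0}) and $\hat\gamma_\alpha=\partial L/\partial y^\alpha$, the terms in $\partial\gamma^\alpha/\partial x^i$ cancel and I obtain
$$d^E(E_L\comp\tilde\gamma)=\Big(\frac{\partial\hat\gamma_\alpha}{\partial x^i}\gamma^\alpha-\frac{\partial L}{\partial x^i}\Big)\rho_\beta^i\,e^\beta,$$
so (\ref{HJeq}) is equivalent to $\big(\frac{\partial\hat\gamma_\alpha}{\partial x^i}\gamma^\alpha-\frac{\partial L}{\partial x^i}\big)\rho_a^i=0$ for every $\U$-index $a$; call this $(\ddagger)$. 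For (i), given the curve condition the first three equations of (\ref{locIL}) hold automatically (the $\U$-membership from $\gamma\in\U$, the velocity equation from (\ref{curvecond}), and $p_\alpha=\partial L/\partial y^\alpha$ from (\ref{gammacond1})), so $\tilde\gamma\comp c$ is a solution iff the last equation of (\ref{locIL}) holds, i.e. $\dot p_a+\C_{a\beta}^\gamma p_\gamma\gamma^\beta-\rho_a^i\partial L/\partial x^i=0$ for all $a$. Along the curve $p_\gamma=\hat\gamma_\gamma$ and $\dot p_a=\frac{\partial\hat\gamma_a}{\partial x^i}\dot x^i=\frac{\partial\hat\gamma_a}{\partial x^i}\rho_\beta^i\gamma^\beta$; call the resulting pointwise identity $(\dagger)$.

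The heart of the argument is to show $(\dagger)\Leftrightarrow(\ddagger)$ using hypothesis (\ref{gammacond2}). Subtracting $(\ddagger)$ from $(\dagger)$ yields
$$\gamma^\beta\Big(\frac{\partial\hat\gamma_a}{\partial x^i}\rho_\beta^i-\frac{\partial\hat\gamma_\beta}{\partial x^i}\rho_a^i\Big)+\C_{a\beta}^\gamma\hat\gamma_\gamma\gamma^\beta.$$
Because $\gamma^A=0$, the sum over $\beta$ collapses to $\U$-indices $b$, and the bracket together with the structure-function term is exactly $-\gamma^b\,d^E(\pr_{E^*}\comp\tilde\gamma)(e_a,e_b)$ by the formula for $d^E$ on a $1$-section. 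This vanishes for all $\U$-indices $a,b$ precisely by (\ref{gammacond2}), so $(\dagger)$ and $(\ddagger)$ coincide. I expect this cancellation to be the main obstacle: the point is to recognize that the structure-function correction $\C_{a\beta}^\gamma p_\gamma\gamma^\beta$ appearing in (\ref{locIL}) is exactly compensated by the $d^E\hat\gamma$ term furnished by (\ref{gammacond2}), once $\gamma\in\U$ has restricted the free velocity index to $\U$. This is where the Lie algebroid structure genuinely enters, and where in the classical case the analogous computation would only involve a symmetry/closedness of a $1$-form.

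Finally, the equivalence of (i) and (ii) follows formally. If (ii) holds then $(\ddagger)$, hence $(\dagger)$, holds at every point of $Q$, so $\tilde\gamma\comp c$ solves the system for every curve $c$ satisfying (\ref{curvecond}). Conversely, assuming (i), fix $x_0\in Q$ and take the integral curve of $\rho(\sigma)=\rho(\gamma)$ through $x_0$; this curve satisfies (\ref{curvecond}), so by (i) the curve $\tilde\gamma\comp c$ is a solution, whence $(\dagger)$ and thus $(\ddagger)$ hold at $x_0$. As $x_0$ is arbitrary, (\ref{HJeq}) follows.
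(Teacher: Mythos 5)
Your proof is correct and takes essentially the same route as the paper's: both pass to local coordinates, reduce condition (i) to the last equation of (\ref{locIL}) and condition (ii) to (\ref{HJeqloc}), and establish their equivalence by the same key cancellation, in which hypothesis (\ref{gammacond2}) trades $\frac{\partial\bar{\gamma}_\alpha}{\partial x^j}\rho^j_\beta\gamma^\beta$ for $\frac{\partial\bar{\gamma}_\beta}{\partial x^j}\rho^j_\alpha\gamma^\beta$ plus structure-function terms that vanish by the skew-symmetry $\C^\gamma_{\alpha\beta}=-\C^\gamma_{\beta\alpha}$. Your refinements---using the adapted basis of Remark \ref{rmcoor}, deriving $\sigma=\gamma$ (the $E$-component of $\tilde{\gamma}$) explicitly from the Dirac condition $(X,\textbf{D}L)\in\D_\U$ where the paper asserts (\ref{curvecondloc}) without comment, and invoking the existence of an integral curve of $\rho(\sigma)$ through each point in the direction (i)$\Rightarrow$(ii)---only tighten steps the paper leaves tacit, and do not change the structure of the argument.
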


\begin{proof} We give the proof using local coordinates. We
consider local coordinates $(x^i)$ on an open subset $V$ of $Q$
and a local basis $\{e_\alpha\}$ of sections of $E$ defined on
$V$, then we have the corresponding local coordinates
$(x^i,y^\alpha)$ on $E$. Denote by $\rho_\alpha^i$ and ${\mathcal
C}_{\alpha\beta}^{\gamma}$ the structure functions of the Lie
algebroid $E$ with respect to $(x^i)$ and $\{e_\alpha\}$.

Suppose that
$\tilde{\gamma}(x^i)=(x^i,\gamma^\alpha(x^j),\bar{\gamma}_\alpha(x^j))$.
Then, the condition (\ref{gammacond1}) means that
\begin{equation}\label{gammacond1loc}
(x^i,\gamma^\alpha(x^j))\in\U(x^i) \quad\text{and}\quad
\bar{\gamma}_\alpha(x^i)=\displaystyle\frac{\partial L}{\partial
y^\alpha}(x^i,\gamma^\alpha(x^j)),
\end{equation} and the condition
(\ref{gammacond2}) can be written locally as
\begin{equation}\label{gammacond2loc}
\displaystyle\frac{\partial \bar{\gamma}_\delta}{\partial x^i}
\rho^i_\beta v^\beta w^\delta =\Big(\frac{\partial
\bar{\gamma}_\beta}{\partial x^i}\rho^i_\delta+\bar{\gamma}_\alpha
\C_{\beta\delta}^\alpha\Big) v^\beta w^\delta,
\end{equation}
for all $v, w \in\U$ given locally by $v=v^\beta e_\beta$ and
$w=w^\delta e_\delta$.

If $c(t)=(c^i(t))$, it is easy to prove that equation
(\ref{curvecond}) can be rewritten in local coordinates as
\begin{equation}\label{curvecondloc}
\dot{c}^i(t)=\gamma^\alpha(c(t))\rho_\alpha^i(c(t)).
\end{equation}

Using the hypothesis (\ref{gammacond1}) (see its local expression
(\ref{gammacond1loc})), we also have that equation (\ref{HJeq}) is
locally written as
\begin{equation}\label{HJeqloc}
\Big(\gamma^\beta\displaystyle\frac{\partial\bar{\gamma}_\beta}{\partial
x^i}-\frac{\partial L}{\partial x^i}\Big)\rho^i_\alpha v^\alpha=0,
\end{equation}
for all $v=v^\alpha e_\alpha\in \U$.

(i) $\Rightarrow$ (ii) Assume that (i) holds. Therefore
\begin{equation}\label{condi}
\begin{cases}
 (c^i(t),\gamma^\alpha(c(t)))\in\U(c(t)),\\
\dot{c}^i(t)=\gamma^\alpha(c(t))\rho^i_\alpha(c(t)), \\
\bar{\gamma}_\alpha(c(t))=\displaystyle\frac{\partial
L}{\partial y^\alpha}(c(t),\gamma^\beta(c(t))),\\
\Big(\displaystyle\frac{\partial\bar{\gamma}_\alpha}{\partial
x^j}(c(t))\dot{c}^j(t)+\C_{\alpha\beta}^\delta(c(t))
\bar{\gamma}_\delta(c(t))\gamma^\beta(c(t))\\
\qquad\qquad -\rho^j_\alpha(c(t)) \displaystyle\frac{\partial
L}{\partial
x^j}(c(t),\gamma^\beta(c(t)))\Big)e^\alpha(c(t))\in\U^\circ(c(t)).
\end{cases}
\end{equation}

Then, using (\ref{gammacond2loc}) and (\ref{condi}), we have that
at $c(t)$, for $w=w^\alpha e_\alpha(c(t))\in\U(c(t))$,
\begin{align*}
0&=\Big(\displaystyle\frac{\partial\bar{\gamma}_\alpha}{\partial
x^j} \gamma^\beta\rho_\beta^j+\C_{\alpha\beta}^\delta
\bar{\gamma}_\delta\gamma^\beta-\rho^j_\alpha
\displaystyle\frac{\partial L}{\partial
x^j} \Big) w^\alpha\\
&=\Big( \displaystyle\frac{\partial\bar{\gamma}_\beta}{\partial
x^j}
\gamma^\beta\rho_\alpha^j+\bar{\gamma}_\delta\C_{\beta\alpha}^\delta
\gamma^\beta +\C_{\alpha\beta}^\delta
\bar{\gamma}_\delta\gamma^\beta-\rho^j_\alpha
\displaystyle\frac{\partial L}{\partial
x^j}\Big) w^\alpha\\
&=\Big( \displaystyle\frac{\partial\bar{\gamma}_\beta}{\partial
x^j} \gamma^\beta- \displaystyle\frac{\partial L}{\partial x^j}
\Big)\rho^j_\alpha w^\alpha.
\end{align*}
Then, we conclude that (ii) holds (see (\ref{HJeqloc})).

(ii) $\Rightarrow$ (i) Suppose that (ii) holds, that is, condition
(\ref{HJeqloc}) is satisfied. Let $c:I\to Q$ a curve such that
$\dot{c}(t)=\rho(\sigma)(c(t))$. Then, we have that
$$\dot{c}^i(t)=\gamma^\alpha(c(t))\rho_\alpha^i(c(t)).$$
Moreover, from (\ref{gammacond1loc}), we also know that
\begin{equation}\label{iipart2}
(c^i(t),\gamma^\alpha(c(t)))\in\U(c(t)) \quad \text{and}\quad
\bar{\gamma}_\alpha(c(t))=\displaystyle\frac{\partial L}{\partial
y^\alpha}(c(t),\gamma^\alpha(c(t))),
\end{equation}

Moreover, using (\ref{gammacond1loc}) and (\ref{gammacond2loc}),
we deduce that at $c(t)$, for all $w=w^\alpha
e_\alpha(c(t))\in\U(c(t))$,
\begin{equation}\label{iipart3}
\begin{aligned}
\quad\,\Big(\displaystyle\frac{\partial\bar{\gamma}_\alpha}{\partial
x^j}\dot{c}^j&+\C_{\alpha\beta}^\delta
\bar{\gamma}_\delta\gamma^\beta-\rho^j_\alpha
\displaystyle\frac{\partial L}{\partial
x^j}\Big) w^\alpha\\
&=\Big( \displaystyle\frac{\partial\bar{\gamma}_\beta}{\partial
x^j}
\gamma^\beta\rho_\alpha^j+\bar{\gamma}_\delta\C_{\beta\alpha}^\delta
\gamma^\beta +\C_{\alpha\beta}^\delta
\bar{\gamma}_\delta\gamma^\beta -\rho^j_\alpha
\displaystyle\frac{\partial L}{\partial
x^j} \Big) w^\alpha\\
&=\Big( \displaystyle\frac{\partial\bar{\gamma}_\beta}{\partial
x^j} \gamma^\beta- \displaystyle\frac{\partial L}{\partial x^j}
\Big)\rho^j_\alpha w^\alpha=0.
\end{aligned}
\end{equation}

Finally, from (\ref{iipart2}) and (\ref{iipart3}), we conclude
that $\tilde{\gamma}\comp c$ is an integral curve of ${X}$.

\end{proof}

\section{Examples}\label{Secex}

\subsection{The case $\U=E$}

This is perhaps the simplest case in which one has no constraints
but the Lagrangian may be degenerate. In this case, the induced
almost Dirac structure $\D_\U=\D_E$ is given by
$$\D_E(e^*)=\{(X_{e^*},\alpha_{e^*})\in
\T^E_{e^*}E^*\times(\T^E_{e^*}E^*)^* \,|\,
\alpha_{e^*}=\Omega_E^\flat(e^*)(X_{e^*})\}.$$ So, locally the
equations defining the almost Dirac structure in this case are
$$v^\alpha=z^\alpha\quad \text{and}\quad r_\alpha=-u_\alpha-\C_{\alpha\beta}^\gamma p_\gamma z^\beta,$$
where $X_{e^*}\equiv (x^i,p_\alpha;z^\alpha,u_\alpha)$ and
$\alpha_{e^*}\equiv (x^i,p_\alpha;r_\alpha,v^\alpha)$. Then, a
curve $(x^i(t),y^\alpha(t))$ in $E$ is a solution of the implicit
Lagrangian system if and only if
$$p_\alpha=\displaystyle\frac{\partial L}{\partial y^\alpha}, \;\dot{x}^i=\rho_\alpha^i y^\alpha \quad\text{and}\quad
\dot{p}_\alpha=\rho_\alpha^i\frac{\partial L}{\partial
x^i}-\C_{\alpha\beta}^\gamma p_\gamma y^\beta.$$ This means that
in this case, the condition of an implicit Lagrangian system is
equivalent to the Euler--Lagrange equations for $L$ (see Equations
(2.40) in \cite{LMM}).

In the usual formulation of Lagrangian systems on a Lie algebroid,
one must to restrict to admissible curves on the Lie algebroid
$E$, that is, curves $c(t)$ in $E$ such that $(c(t),\dot{c}(t))\in
\T^E_{c(t)}E$ or, locally, if $c(t)=(x^i(t),y^\alpha(t))$ then
$\dot{x}^i=\rho_\alpha^i y^\alpha$. Notice that integral curves of
an implicit Lagrangian system automatically satisfy this
condition.

In this case we can reformulate the Theorem \ref{HJth} as follows.

\begin{corollary}Let ${\gamma}:Q\to E$ be a section of
the Lie algebroid $\tau:E\to Q$ such that
$$d^E(\FL\comp{\gamma})=0.$$
Denote by $\sigma\in\Gamma(E)$ the section $\sigma=\pr_1\comp
X\comp \FL\comp{\gamma}$, where $\pr_1:\T^EE^*\to E$ is the
projection on the first factor. Then, the following conditions are
equivalent:
\begin{enumerate}
\item For every curve $c:I\to Q$ in $Q$ such that
$$\dot{c}(t)=\rho(\sigma)(c(t)), \,\text{ for all }\,t,$$
the curve ${\gamma}\comp c$ is a solution of the implicit
Lagrangian system $(L,E,X)$, or equivalently, $\gamma\comp c$ is a
solution of the Euler--Lagrange equations for the Lagrangian $L$.
\item ${\gamma}$ satisfies
$$d^E(\varepsilon_L\comp {\gamma})=0,$$
where $\varepsilon_L:E\to \R$ is the energy function associated
with $L$ (see (2.39) in \cite{LMM}) which is defined as
$\varepsilon_L=\rho^\tau(\Delta)(L)-L$, $\Delta\in\Gamma(\T^EE)$
is the Euler section and $\rho^\tau$ is the anchor map of the Lie
algebroid $\tau^\tau:\T^EE\to E$ (see Example \ref{exTEE}).
\end{enumerate}
\end{corollary}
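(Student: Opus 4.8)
The plan is to derive this corollary as a direct specialization of Theorem \ref{HJth} to the case $\U = E$, exploiting the simplifications that occur when there are no constraints. First I would observe that when $\U = E$, the annihilator $\U^\circ = 0$, so the two conditions appearing in Theorem \ref{HJth} collapse. Specifically, the Hamilton--Jacobi equation (\ref{HJeq}), namely $d^E(E_L \comp \tilde\gamma) \in \U^\circ$, becomes the honest equation $d^E(E_L \comp \tilde\gamma) = 0$, and likewise condition (\ref{gammacond2}), $d^E(\pr_{E^*}\comp\tilde\gamma)_{|\U\times\U}=0$, becomes a condition with no restriction on the test vectors, i.e. simply $d^E(\pr_{E^*}\comp\tilde\gamma)=0$.

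Next I would translate the hypotheses of the corollary, which are phrased in terms of a section $\gamma : Q \to E$, into the hypotheses of the theorem, which are phrased in terms of a section $\tilde\gamma : Q \to E \oplus_Q E^*$. The natural correspondence is $\tilde\gamma = i_E \comp \gamma = (\gamma, \FL\comp\gamma)$, using the map $i_E$ from the preceding remark whose local expression is $i_E(x^i,y^\alpha)=(x^i,y^\alpha,\partial L/\partial y^\alpha)$. With this identification, the requirement $\tilde\gamma(Q)\subset\mathcal{K}$ (condition (\ref{gammacond1})) is automatic since $\mathcal{K}=i_E(\U)=i_E(E)$; the requirement $\pr_{E^*}\comp\tilde\gamma = \FL\comp\gamma$ identifies the second component; and the hypothesis $d^E(\FL\comp\gamma)=0$ of the corollary is exactly condition (\ref{gammacond2}) in its simplified ($\U=E$) form. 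One also checks that the section $\sigma$ defined in the corollary, namely $\sigma=\pr_1\comp X\comp\FL\comp\gamma$, matches the section $\sigma=\pr_1\comp X\comp\pr_{E^*}\comp\tilde\gamma$ of the theorem under the same identification.

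The remaining point is to match the two formulations of statement (ii): the corollary asks for $d^E(\varepsilon_L\comp\gamma)=0$ while the theorem gives $d^E(E_L\comp\tilde\gamma)=0$. Here I would verify that $E_L\comp\tilde\gamma = \varepsilon_L\comp\gamma$ as functions on $Q$. This follows from comparing the definitions: the generalized energy is $E_L(x,e,e^*)=\langle e^*,e\rangle - L(x,e)$, so along $\tilde\gamma=(\gamma,\FL\comp\gamma)$ one gets $\langle \FL(\gamma),\gamma\rangle - L(\gamma)$, and using the local expression $\FL(x^i,y^\alpha)=(x^i,\partial L/\partial y^\alpha)$ together with $\varepsilon_L=\rho^\tau(\Delta)(L)-L = y^\alpha\,\partial L/\partial y^\alpha - L$ (since $\Delta=y^\alpha\V_\alpha$ and $\rho^\tau(\V_\alpha)=\partial/\partial y^\alpha$) one sees these coincide. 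Finally, statement (i) of the corollary is literally statement (i) of the theorem under the identification $\tilde\gamma\comp c = (\gamma\comp c,\FL\comp\gamma\comp c)$, with the additional observation—already recorded in the discussion of the case $\U=E$—that the implicit Lagrangian system reduces to the Euler--Lagrange equations, giving the stated equivalent phrasing.

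I expect no genuine obstacle here, as this is a specialization rather than a new argument; the only point requiring care is the bookkeeping in the previous paragraph, namely confirming the identity $E_L\comp\tilde\gamma=\varepsilon_L\comp\gamma$ and checking that the simplified conditions (\ref{gammacond2}) and (\ref{HJeq}) reduce exactly as claimed when $\U^\circ=0$. All of these are routine verifications using the local coordinate expressions established earlier in the paper.
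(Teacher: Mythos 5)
Your proposal is correct and follows exactly the route the paper intends: the corollary is presented there without a separate proof, precisely as the specialization of Theorem \ref{HJth} to $\U=E$ via the identification $\tilde{\gamma}=i_E\comp\gamma=(\gamma,\FL\comp\gamma)$, under which condition (\ref{gammacond1}) is automatic, condition (\ref{gammacond2}) becomes $d^E(\FL\comp\gamma)=0$, and the Hamilton--Jacobi equation (\ref{HJeq}) becomes $d^E(E_L\comp\tilde{\gamma})=d^E(\varepsilon_L\comp\gamma)=0$ since $\U^\circ=0$. Your verification of the identity $E_L\comp\tilde{\gamma}=\varepsilon_L\comp\gamma$ and of the equivalence between the implicit Lagrangian equations and the Euler--Lagrange equations when $\U=E$ supplies exactly the bookkeeping the paper leaves implicit.
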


This result can be viewed as the Lagrangian version of the Theorem
3.16 in \cite{LMM}.

\subsection{The case $E=TQ$}

Let $E$ be the standard Lie algebroid $\tau_{TQ}:TQ\to Q$. In this
case, the sections of this vector bundle can be identified with
vector fields on $Q$, the Lie bracket of sections is just the
usual Lie bracket of vector fields and the anchor map is the
identity map $id:TQ\to TQ$. A vector subbundle $\U$ of $TQ$ is
just a distribution $\Delta_Q$ on $Q$.

Moreover, in this case, the Lie algebroid
$(\T^EE^*,\lcf\cdot,\cdot\rcf^{\tau^*},\rho^{\tau^*})$ is the
standard Lie algebroid $(TT^*Q,[\cdot,\cdot],id)$. So, the lift of
the vector subbundle $\U=\Delta_Q$ to $\T^EE^*=TT^*Q$ is just the
distribution $\Delta_{T^*Q}$ on $T^*Q$ defined as
$$\Delta_{T^*Q}=(T\pi_Q)^{-1}(\Delta_Q),$$
$\pi_Q:T^*Q\to Q$ being the canonical projection. Moreover,
$\Omega_E=\Omega_{TQ}$ is the canonical symplectic 2-form on
$T^*Q$. Then, the induced almost Dirac structure $\D_{\U}$ defined
in Theorem \ref{thDU} is given, for each point $z\in T^*Q$, by
\begin{align*}
\D_{\Delta_Q}=&\{ (v_z,\alpha_z)\in T_zT^*Q\times T^*_zT^*Q |
v_z\in \Delta_{T^*Q}(z) \text{ and }\\
&\qquad\qquad \alpha_z-\Omega_{TQ}^\flat(z)(v_z)\in
\Delta_{T^*Q}^\circ(z)\}. \end{align*}

This almost Dirac structure coincides with the induced almost
Dirac structure introduced by Yoshimura and Marsden in \cite{YM1}.
Thus, if we apply the results of Section \ref{secILS} to this
particular case we recover the formulation of implicit Lagrangian
systems develop in \cite{YM1}.

Moreover, applying the Theorem \ref{HJth} to this particular case
one recover the result develop in \cite{LOS} for standard implicit
Lagrangian systems.

\begin{example}\label{exEPred}{\rm We are going to consider a simple example: the
case of Euler--Poincar\'e reduction. In this case, we consider the
particular case when the manifold $Q$ is a Lie group $G$ and the
distribution $\Delta_Q$ is just $TQ=TG$ (that is, the cases when
$\U=E$ and $E=TG$). Let $L:TG\to\R$ be a left-invariant
Lagrangian. Then, we have that $(g(t),v(t))\in T_{g(t)}G$,
$t_1\leq t\leq t_2$, is a solution curve of the implicit
Lagrangian system $(L,TG,X)$ if and only if $g(t)$ is a solution
of the Euler--Lagrange equations for $L$ on $G$ and
$\dot{g}(t)=v(t)$, for $t_1\leq t\leq t_2$.

On the other hand, let ${\mathfrak g}$ be the Lie algebra
associated with $G$ which is a Lie algebroid over a point. As, $L$
is a left-invariant function, we can consider the reduced
Lagrangian $l:{\mathfrak g}\to\R$, $l=L_{|{\mathfrak g}}$. Taking
$\U={\mathfrak g}$, we have that a curve $\xi(t)\in {\mathfrak g}$
is a solution of the implicit Lagrangian system $(l,{\mathfrak
g},Y)$ if and only if it is a solution of the Euler--Poincar\'e
equations on ${\mathfrak g}$. Moreover, as well known, $g(t)$
satisfies the Euler--Lagrange equations for $L$ on $G$ if and only
if $\xi(t)={g(t)}^{-1}\dot{g}(t)$ satisfies the Euler--Poincar\'e
equations on ${\mathfrak g}$. Then, we conclude that $(g(t),v(t))$
is a solution curve of the implicit Lagrangian system $(L,TG,X)$
if and only if $\xi(t)={g(t)}^{-1}\dot{g}(t)$ is a solution curve
of the implicit Lagrangian system $(l,{\mathfrak g},Y)$ and
$\dot{g}(t)=v(t)$.

}
\end{example}

\subsection{Nonholonomic mechanics on Lie
algebroids}

Let $\alg$ be a Lie algebroid. Nonholonomic constraints on the Lie
algebroids setting are given by a vector subbundle $\U$ of $E$. In
\cite{CoLeMaMa}, the authors introduced the notion of a
nonholonomically constrained Lagrangian system on a Lie algebroid
$E$ as a pair $(L,\U)$, where $L:E\to \R$ is a Lagrangian function
on $E$ and $\U$ is the constraint subbundle, that is, it is a
vector subbundle of $E$.

If we consider local coordinates as in Remark \ref{rmcoor}, then a
solution curve $(x^i(t),$ $y^\alpha(t))$, $t_1\leq t\leq t_2$, on
$E$ for a nonholonomic system must satisfy the differential
equations (see Equations (3.7) in \cite{CoLeMaMa})
\[
\begin{cases}
\dot{x}^i=\rho_a^i y^a,\\[6pt]
\displaystyle\frac{d}{dt}\Big( \frac{\partial L}{\partial y^a}
\Big)+\frac{\partial L}{\partial y^\gamma}\C_{ab}^\gamma
y^b-\rho_a^i \frac{\partial L}{\partial x^i}=0,\\[8pt]
y^A=0.
\end{cases}
\]

So, a nonholonomic system is locally represented by an implicit
Lagrangian system $(L,\U,X)$ together with the condition
$$p_\alpha(t)=\displaystyle\frac{\partial L}{\partial y^\alpha}(x^i(t),y^\alpha(t)),$$
since $(x^i(t),p_\alpha(t))=\FL(x^i(t),y^\alpha(t))$, where $\FL$
is the Legendre transformation.

\begin{example}{\rm
Consider the situation of Example \ref{exEPred} but with a
non-trivial left-invariant distribution on $G$. Then, we have
$Q=G$ a Lie group, $L:TG\to\R$ a left-invariant Lagrangian and
$\U=\Delta_G$, $\Delta_G\neq TG$ and $\Delta_G\neq \{0\}$, a
left-invariant distribution on $G$, that is, a standard
nonholonomic LL system on $G$. As we have proved in general,
$(g(t),v(t),p(t))$, $t_1\leq t\leq t_2$, is a solution curve of
the implicit Lagrangian system $(L,TG,X)$ if and only if
$(g(t),v(t))$ is a solution of the Lagrange-d'Alembert equations
for $L$ on $G$ and $p(t)=\partial L/\partial v(g(t),v(t))$, for
$t_1\leq t\leq t_2$.

On the other hand, this type of nonholonomic system on $G$ may be
reduced to a nonholonomic system on the Lie algebra ${\mathfrak
g}$ associated with $G$. The reduced Lagrangian $l:{\mathfrak
g}\to\R$ is $l=L_{|{\mathfrak g}}$ and the vector subspace
${\mathfrak d}$ of ${\mathfrak g}$ is given by ${\mathfrak
d}=\Delta_G(e)$. Then, one has a constrained system $(l,{\mathfrak
d})$ on ${\mathfrak g}$. So, a curve $\xi(t)\in {\mathfrak g}$ is
a solution of the implicit Lagrangian system $(l,{\mathfrak g},Y)$
if and only if it is a solution of the constrained
Euler--Poincar\'e equations (or the so-called
Euler--Poincar\'{e}--Suslov equations, see \cite{FeZe}) on ${\mathfrak
g}$.

}
\end{example}

\section{Conclusion and Future work}\label{Secfut}

In this paper, we introduced the notion of an induced almost Dirac
structure, and show how it leads to implicit Lagrangian systems on
Lie algebroids. This provides a generalization of Lagrangian
mechanics on Lie algebroids that can address degenerate
Lagrangians as well as holonomic and nonholonomic constraints.
Furthermore, we have obtained a Hamilton--Jacobi theory for such
systems.

In future research, we aim to study the possibility of obtaining a
Hamilton--Jacobi equation, as in the general case described in
\cite{LMM2}, using the notion of Dirac algebroids given in
\cite{GG}. In this case, the theory will include all important
cases of Lagrangian and Hamiltonian systems, including systems
with and without constraints, and autonomous and non-autonomous
systems.

Another interesting direction would be to generalize to Dirac algebroids the work in \cite{LOS} that relates the Hamilton--Jacobi theory of a Dirac mechanical system with symmetry and the Hamilton--Jacobi theory of the associated reduced Dirac system. Furthermore, the relationship between the various Hamilton--Jacobi theories for reduction of Dirac mechanical systems formulated on Dirac, Courant, and Lie algebroids, and the formulations based on Lagrange--Poincar\'e bundles~\cite{YM3} also remains to be studied.

\section*{Acknowledgements}
This material is based upon work supported by the National Science
Foundation under the applied mathematics grant DMS-0726263, the
Faculty Early Career Development (CAREER) award DMS-1010687, the
FRG award DMS-1065972, MICINN (Spain) grants MTM2009-13383 and
MTM2009-08166-E, and the projects of the Canary government
SOLSUBC200801000238 and ProID20100210. We also want to thank the two anonymous referees for their helpful comments and suggestions.

\end{document}